\newif\ifreview
\newif\ifarxiv
\definecolor{BLUE}{rgb}{0,0,1}
\pgfplotsset{compat=newest}
\pgfplotsset{every tick label/.append style={font=\scriptsize},
        every y tick scale label/.append style={yshift=-1em}}
\renewcommand{\vec}[1]{#1}
\newcommand{\xdot}{\dot{x}}
\newcommand{\x}{x}
\renewcommand{\u}{u}
\newcommand{\f}{f}
\newcommand{\g}{g}
\renewcommand{\u}{\vec{u}}
\newcommand{\set}[1]{\mathbb{#1}}
\newcommand{\R}{\mathbb{R}}
\newcommand{\N}{\mathbb{N}}
\newcommand{\K}{\mathcal{K}}
\newcommand{\xinactive}{\set{D}_{\rho \neq 1}}
\newcommand{\xineps}{\set{D}_{\epsilon}}
\newtheorem{theorem}{Theorem}
\newtheorem{corollary}{Corollary}
\newtheorem{proposition}{Proposition}
\newtheorem{definition}{Definition}
\newtheorem{lemma}{Lemma}
\def\BibTeX{{\rm B\kern-.05em{\sc i\kern-.025em b}\kern-.08em
    T\kern-.1667em\lower.7ex\hbox{E}\kern-.125emX}}
\begin{document}
\title{Preventing Inactive CBF Safety Filters Caused by Invalid Relative Degree Assumptions}

\author{Lukas Brunke, \IEEEmembership{Graduate Student Member, IEEE}, Siqi Zhou, \IEEEmembership{Member, IEEE},\\ and Angela P. Schoellig, \IEEEmembership{Member, IEEE}
\thanks{Lukas Brunke and Angela P. Schoellig are with the Learning Systems and Robotics Lab, 
Technical University of Munich, 80333 Munich, Germany, also with the University of Toronto Institute for Aerospace Studies, North York, ON M3H 5T6, Canada, 
and also with the Vector Institute for Artificial Intelligence, Toronto, ON M5G 0C6, Canada~(e-mail: lukas.brunke@tum.de; angela.schoellig@tum.de).
}
\thanks{Siqi Zhou is with the Learning Systems and Robotics Lab, 
Technical University of Munich, 80333 Munich, Germany~(e-mail: siqi.zhou@tum.de).}
\thanks{This work has been supported by the Robotics Institute Germany, funded by BMBF grant 16ME0997K, and the European Union’s Horizon 2024 research and innovation programme under the Marie Skłodowska-Curie grant agreement No. 101155035 (SSDM).}
}

\maketitle

\begin{abstract}
Control barrier function~(CBF) safety filters emerged as a popular framework to certify and modify potentially unsafe control inputs, for example, provided by a reinforcement learning agent or a non-expert user. Typical CBF safety filter designs assume that the system has a uniform relative degree. This assumption is restrictive and is frequently overlooked in practice. When violated, the assumption can cause the safety filter to become inactive, allowing large and possibly unsafe control inputs to be applied to the system. In discrete-time implementations, the inactivity issue is often manifested as chattering close to the safety boundary and/or constraint violations. In this work, we provide an in-depth discussion on the safety filter inactivity issue, propose a mitigation strategy based on multiple CBFs, and derive an upper bound on the sampling time for safety under sampled-data control. The effectiveness of our proposed method is validated through both simulation and quadrotor experiments.
\end{abstract}

\begin{IEEEkeywords}
control barrier function, safety, nonlinear control, sampled-data control
\end{IEEEkeywords}

\section{Introduction}
\label{sec:introduction}

\IEEEPARstart{S}{afety} is paramount in any real-world application of control systems, such as robotics~\cite{DSL2021}.
A safe set encodes states the system should stay within~(i.e., a quadrotor avoiding collision with the floor, see ~\autoref{fig:money-figure}).
While most control systems are designed with a focus on safety, many recent learning-based control methods~(e.g., reinforcement learning) typically do not provide safety guarantees~\cite{DSL2021}. Similarly, teleoperating a control system may lead to unsafe scenarios, especially when the user is not an expert. 
In such cases, the system's safety can be ensured by employing a safety filter, which certifies and, if necessary, modifies an unsafe control input before it is applied to the control system. 

A popular line of work on safety filters uses control barrier functions~(CBFs)~\cite{jaime-survey-2024}. 
Based on Nagumo's theorem~\cite{Nagumo1942berDL}, CBFs~\cite{WIELAND2007462, ames2019a} provide a scalar condition to check control invariance, which can be efficiently incorporated into a safety filter.  
CBF safety filters gained popularity due to (\textit{i}) their applicability to control affine systems, resulting in an efficiently-solvable quadratic program~(QP), and (\textit{ii}) the removal of overly conservative restrictions that kept the system inside shrinking safe sets~\cite{ames2019a}.    
Furthermore, CBFs have already been successfully applied to many real-world safety-critical systems~(e.g., safe quadrotor flight~\cite{brunke-lcss-2024},
or robot swarms~\cite{Glotfelter2017}). 

\begin{figure}[t]
    \definecolor{mypurple}{RGB}{190,68,189}
    \definecolor{lightgray}{RGB}{200,200,200}
    \centering
    \begin{tikzpicture}
    \ifreview
        \node[anchor=south west,inner sep=0] (image) at (0,0) {\includegraphics[width=21pc, trim={130 35 190 260}, clip]{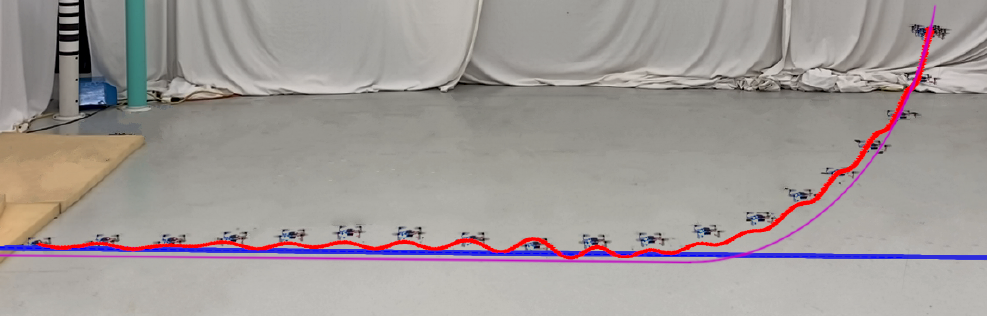}};
        \draw[white, fill, fill opacity=1.0, draw opacity =0.0] (0.0,2.1) rectangle (6.3, 2.9);
    \else 
        \node[anchor=south west,inner sep=0] (image) at (0,0) {\includegraphics[width=\columnwidth, trim={130 35 190 260}, clip]{figures/tac-money-figure-removed-noise-single_0905.pdf}};
        \draw[white, fill, fill opacity=1.0, draw opacity =0.0] (0.0,2.1) rectangle (5.4, 2.9);
    \fi
    \node [anchor=west] (note) at (0.05,2.6) {Standard Single-CBF Safety Filter};
    \node [anchor=west] (note) at (0.05,2.3) {\scriptsize Chattering and constraint violations may occur};
    \node [fill= lightgray, fill opacity=1.0, anchor=west] (note) at (6.6,0.25) {\textcolor{blue}{\scriptsize Safe set boundary}};
    \node [fill= lightgray,anchor=west] (note) at (4.8,1.65) {\textcolor{red}{\scriptsize Closed-loop trajectory}};
    \node [fill= lightgray,anchor=west] (note) at (6.4,2.6) {\textcolor{red}{\scriptsize Starting point}};
    \node [anchor=west] (start) at (8.1, 2.55) {\scriptsize\textcolor{red}{$\times$}}; 
    \node [fill= lightgray,anchor=west] (note) at (7.5,0.8) {\textcolor{mypurple}{\scriptsize Reference}};
\end{tikzpicture}\\%
\begin{tikzpicture}
    \ifreview
        \node[anchor=south west,inner sep=0] (image) at (0,0) {\includegraphics[width=21pc, trim={130 35 190 260}, clip]{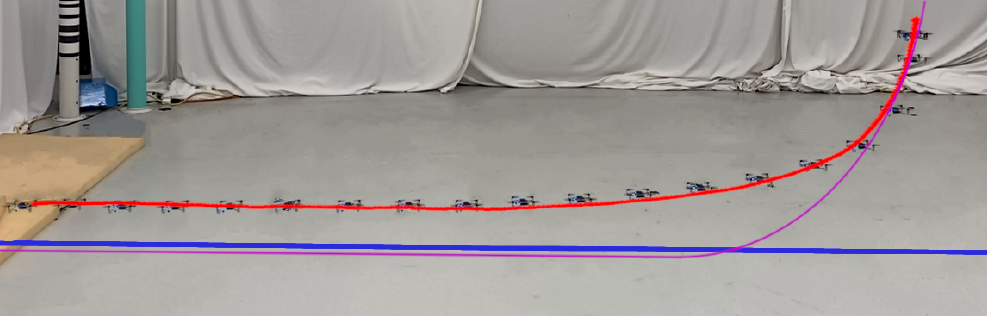}};
        \draw[white, fill, fill opacity=1.0, draw opacity =0.0] (0.0,2.1) rectangle (6.3, 2.9);
    \else 
        \node[anchor=south west,inner sep=0] (image) at (0,0) {\includegraphics[width=\columnwidth, trim={130 35 190 260}, clip]{figures/tac-money-figure-removed-noise-multi_0905.pdf}};
        \draw[white, fill, fill opacity=1.0, draw opacity =0.0] (0.0,2.1) rectangle (5.4, 2.9);
    \fi
    \node [anchor=west] (note) at (0.05,2.6) {Proposed Multi-CBF Safety Filter};
    \node [anchor=west] (note) at (0.05,2.3) {\scriptsize Safe operation is guaranteed};
    \node [fill= lightgray, fill opacity=1.0, anchor=west] (note) at (6.6,0.25) {\textcolor{blue}{\scriptsize Safe set boundary}};
    \node [fill= lightgray,anchor=west] (note) at (4.8,1.65) {\textcolor{red}{\scriptsize Closed-loop trajectory}};
    \node [fill= lightgray,anchor=west] (note) at (6.4,2.6) {\textcolor{red}{\scriptsize Starting point}};
    \node [anchor=west] (start) at (8.00, 2.55) {\scriptsize\textcolor{red}{$\times$}}; 
    \node [fill= lightgray,anchor=west] (note) at (7.5,0.8) {\textcolor{mypurple}{\scriptsize Reference}};
\end{tikzpicture}%
    \caption{
    Real-world quadrotor demonstration with nonlinear dynamics~(safe set boundary in blue, trajectory in red). Single CBF (top) results in an inactive safety filter, causing chattering and safe set violations. Our multi-CBF approach (bottom) successfully keeps the safety filter active and the system safe. See Figure~\ref{fig:multi-cbf-real-input} for details.
    }
    \label{fig:money-figure}
\end{figure}

The relative degree of a system plays an important role in CBF safety filters. 
The relative degree of a system indicates 
which time derivative of the system's output can be directly impacted by the control input~\cite{khalil2002}.  
This local property may not be constant over a nonlinear system's domain.
An active first-order CBF safety filter, where the CBF condition further constrains the set of feasible control inputs, 
requires the system's relative degree to be one in the entire domain~\cite{ames2017}.
If this requirement is not met, the control input cannot impact the first time derivative of the CBF at this state, and the CBF safety filter reduces to projecting control inputs onto the static feasible input set. 

This relative degree assumption has been relaxed by introducing higher-order CBFs, applicable to systems with a relative degree greater than one~\cite{nguyen-exp-cbf-2016, xiao2021high}.
These methods, however, still assume a well-defined, uniform relative degree over the entire domain. 
The assumption of uniform relative degree for first- and higher-order CBFs is not always verified in practice
and poses limitations. While the author in~\cite{JANKOVIC2018359} relaxes this assumption by relying on drift dynamics, the authors in~\cite{Alyaseen2025TAC} show that control policy boundedness and continuity can degrade where this assumption fails, and prove it is necessarily violated for compact safe sets defined by smooth CBF superlevel sets. In~\cite{Tan2022TAC}, the authors propose requiring uniform relative degree only on the safe set boundary, which is still limiting as it does not apply to simple cases like linear dynamics with quadratic CBFs~\cite{Alyaseen2025TAC}. Recently, the work by~\cite{lukas-acc-2024} proposed practical approaches for CBFs with non-uniform relative degree, though without theoretical analysis.
Building on the insights from these works, we formally define conditions to prevent inactive CBF safety filters and develop our mitigation strategy. 

Another key challenge in applying CBF safety filters is their discrete-time implementation. Though designed for continuous time, real-world systems require applying control inputs at discrete intervals due to delays and computation time, forming a sampled-data system~\cite{FRIDMAN20041441}. This can worsen issues like violating relative degree assumptions, as feasible large inputs in magnitude may persist during each sampling interval.
Two main approaches address this:
    First, discrete-time CBFs~\cite{Agrawal-RSS-17}, which ensure safety directly in discrete time but often involve nonconvex optimization, increasing computational load.
    Second, continuous-time CBFs with sampled-data adjustments, which either add conditions for fixed sampling~\cite{singletary2020-sampled-data, breeden2022-sampled} or adapt sampling online via self-triggering~\cite{yang2019, bahati2024}. These methods rely on bounding the deviation between the true trajectory and the last sampled state.
In this work, we use a fixed sampling time with a continuous-time formulation. 
This allows us to exploit the efficient QP formulation and provide safety guarantees for all time rather than just at discrete time steps.
Furthermore, we select the minimal feasible sampling time to track the uncertified control input as closely as possible.  

In this work, we consider the interplay of these challenges: how the violation of uniform relative degree assumptions leads to inactive safety filters and how their sampled-data implementation causes safe set violations and chattering~(i.e., high-frequency changes in the control input due to the safety filter)~\cite{
fpb-cdc2023}.
To prevent safe set violations, we present a theoretically sound mitigation method 
based on multiple CBFs in the safety filter.
While multiple CBFs have been considered in the literature, they have generally been used for different purposes, such as inter-agent collision avoidance in multi-agent systems~\cite{Glotfelter2017} or avoiding multiple obstacles~\cite{parwana2023feasible, aali2022}. Our work uniquely applies them to address inactive CBF safety filters caused by non-uniform relative degree in sampled-data systems. 
We summarize our contributions as follows:
\begin{enumerate}
    \item 
    We analyze CBF safety filter inactivity and examine how feasible input direction and magnitude at such states can cause safety violations or chattering in discrete time.
    
    \item 
    We propose a multi-CBF safety filter along with a synthesis procedure to avoid inactivity and thereby mitigate the resulting chattering and constraint violation issues.
    \item %
    We derive a theoretical upper bound on the sampling time for guaranteeing safety under sampled-data implementation.
    \item
    We validate our approach through simulation and real-world quadrotor experiments.
    
\end{enumerate}

The article is organized as follows: \autoref{sec:notation} introduces notation, \autoref{sec:problem} formulates the problem, and \autoref{sec:background} provides background. Our method is presented in \autoref{sec:method}, with evaluations in \autoref{sec:eval}, and conclusion in \autoref{sec:conclusion}.

\section{Mathematical Preliminaries}
\label{sec:notation}
In this section, we introduce the notation and general definitions used in the remainder of this work. 
Let $\set{Z}$, $\N$, and $\N_0$ denote the sets of integers, positive integers, and nonnegative integers, respectively. 
A set of consecutive integers is $\set{Z}_{a, b}$, where $a, b \in \set{Z}$ and $a \leq b$. The Euclidean norm is $\lVert \cdot \rVert$. 
We denote the diagonal matrix \( D \in \mathbb{R}^{n \times n} \) with diagonal entries given by a vector \( d \in \mathbb{R}^n \) as \( D = \operatorname{diag}(d) = \operatorname{diag}(d_1, \dots, d_n)\) and the diagonal entries as $d = \operatorname{diag}(D)$.
The set $\set{B}(x_0, r)$ 
denotes the closed $2$-norm ball of radius $r$ centered at $x_0 \in \R^n$.
The Lie derivative of a differentiable function $h$ along a vector field $f$ is denoted by $\mathcal{L}_f h(x) = \frac{\partial h(x)}{\partial x} f(x)$.

In this work, we consider a control-affine system:
\begin{equation}
\label{eq:nonlinear_affine_control}
	\xdot(t) = \f(\x(t)) + \g (\x(t)) \:\u(t)\, ,
\end{equation}
where $t \in \R$ is the time variable, $\x\in \set{D} \subset \R^n$ is the state of the system with $\set{D}$ being the set of admissible states, $\u\in \set{U} \subset \R^m$ is the input of the system with $\set{U}$ being the set of admissible inputs, and  $\f:\R^n\mapsto \R^n$ and $\g:\R^n\mapsto \R^{n\times m}$ are locally Lipschitz continuous functions. 
Hereafter, we omit the dependency on $t$ unless unclear from context.

\begin{definition}[Positively control invariant set] Let $\mathfrak{U}$ be the set of all bounded control signals
$\nu : \R_{\geq 0} \to \set{U}\,$. A set ${\set{C}\subseteq\set{D}}$ is a positively control invariant set for the control system in~\eqref{eq:nonlinear_affine_control} if~$\: \forall \: {\x_0 \in\set{C}} \,,\: \exists \: \nu \in \mathfrak{U} \,, \: \forall \: t \in \set{T}_{\x_0}^+ \,,\: \phi(t, \x_0, \nu) \in \set{C}$, where $\phi(t, \x_0, \nu)$ is the system's phase flow starting at $\x_0$ under the control signal $\nu$, and $\set{T}_{\x_0}^+$ is the maximum time interval.
\end{definition}

In this work, we also refer to a positively control invariant set as a safe set and to positive control invariance as safety. 

\begin{definition}[Extended class-$\K$ function~\cite{ames2019a}]
\label{def:kappa_inf_extended}
    A function $\gamma: \R \to \R$ is said to be of class-$\K_e$
    if it is continuous, $\gamma(0) = 0$, and strictly increasing. 
\end{definition}
\begin{definition}[Least relative degree and relative degree]
\label{def:rel-degree}
 Consider a set $\set{D} \subset \R^n$ and a system consisting of the dynamics equation in \eqref{eq:nonlinear_affine_control} and the output equation $y = h(x)$, where $h$ is $\rho^{\text{th}}$-order differentiable. The system is said to have a \textit{least relative degree} of $\rho \in \set{Z}_{1, n}$ over $\set{D}$ if $\mathcal{L}_g \mathcal{L}_f^{i} h(x) = 0$ for $i \in \set{Z}_{0, \rho -2}$~\cite[Def. 4]{Tan2022TAC}.
    Furthermore, the system is said to have a \textit{relative degree} of $\rho \in \set{Z}_{1, n}$ over $\set{D}$ if  $\mathcal{L}_g \mathcal{L}_f^{\rho - 1} h(x) \neq 0$ for all $x \in \set{D}$ is additionally satisfied~\cite[Def. 13.2]{khalil2002}.
\end{definition}

\section{Problem Formulation}
\label{sec:problem}
We consider the control-affine system in~\eqref{eq:nonlinear_affine_control}. We assume that the domain $\set{D}$ is an open set, and $f$ and $g$ have Lipschitz constants $L_f$ and $L_g$, respectively.
We are given a bounded desired safe set $\set{X} \subseteq \set{D}$, in which we want to keep the system for all future time $t \geq 0$.
We aim to find a compact inner approximation of the desired safe set, $\set{C}_{\set{Z}_{1, K}} \subseteq \set{X}$, where $K \in \N$, $\set{C}_{\set{Z}_{1, K}} = \{ x \in \R^n \,\vert\, h_i(x) \geq 0\,, \forall i \in \set{Z}_{1, K} \}$ that is control invariant under non-empty compact polytopic input constraints $\set{U} = \{ u \in \R^m \vert A_u u \leq b_u \}$ with continuously differentiable functions  $h_i$ for all $i \in \set{Z}_{1,K}$. 
The boundary of the safe set is $\partial \set{C}_{\set{Z}_{1,K}} =  \{ x \in \R^n \,\vert\, \min_{i \in \set{Z}_{1, K}} h_i(x) = 0 \}$. 

Our goal is to augment a given, potentially unsafe state-feedback controller~$\pi: \R^n \to \R^m$ with a safety filter $\pi_{\text{cert}}: \R^n\times \R^m \to \R^m$ such that the system is safe~(i.e., the system's state~$\x$ stays inside a safe set $\set{C}_{\set{Z}_{1, K}}$ if it starts inside of $\set{C}_{\set{Z}_{1, K}}$).
We assume that the controller $\pi$ is locally Lipschitz continuous and has a Lipschitz constant $L_{\pi}$.

In this work, we aim to guarantee safety for a sampled-data system, where the control input can only be updated discretely at fixed time intervals $t_k = k \Delta t$ with $k \in \N_0$ and $\Delta t > 0$. In a zero-order-hold (ZOH) fashion, the control input is
\begin{equation}
\label{eq:sampled-input}
    u(t_k + \tau) = u_{\text{cert},k} = \pi_{\text{cert}} (x(t_k), u_k), \forall \tau \in \left[0, \Delta t\right)
\end{equation}
with the uncertified control input $u_k = \pi(x(t_k))$ and $u_{\text{cert},k}$ is the certified control input. 
This setting is motivated by real-world systems, where 
control inputs cannot be applied arbitrarily fast to the system.

A safety filter based on a single CBF~($K = 1$) can contain states on the boundary of $\set{C}_{\{ 1\}}$ for which the safety filter will become inactive. 
This is a known issue that reduces the CBF safety filter to projecting the uncertified control input $\pi(x)$ onto the set $\set{U}$, and the CBF has no effect. 
This inactivity can often lead to safety constraint violations. 
Therefore, the goal in our work is to propose a mitigation strategy that resolves the inactivity issue. 
\section{Background}
\label{sec:background}
This section introduces the relevant definitions and background on CBF-based safety filters to facilitate our discussion.

\begin{definition}[Control barrier function~(CBF)~\cite{Tan2022TAC}]
\label{def:cbf-tan}
    Consider the control system in~\eqref{eq:nonlinear_affine_control} and a first-order differentiable function $h : \R^n \to \R$. The function $h$ is called a CBF, if there exists a class-$\K_e$ function $\gamma$, an open set $\set{D}$ with $\set{C} \subset \set{D} \subset \R^n$, where $\set{C}$ is the zero superlevel set of $h$, the system is of least relative degree $\rho = 1$, and for all $x \in \set{D}$
    \begin{equation}    \label{eq:cbf_lie_derivative}
		\max_{\u \in \set{U}} \left[\mathcal{L}_\vec{f} h(\x) + \mathcal{L}_\vec{g} h(\x) \u \right] \geq - \gamma(h(x)) \,.
	\end{equation}
\end{definition}

Using a CBF, we can define an input set
\begin{equation}
\label{eq:cbf-control-input-set}
 \set{U}_{h,\gamma}(x) = \{u\in \set{U} \:\vert \:  \mathcal{L}_\vec{f} h(\x) + \mathcal{L}_\vec{g} h(\x) \u \geq - \gamma(h(x))\} \,
\end{equation}
that renders the system safe~\cite{ames2019a}.
\begin{theorem}[Control invariance of $\set{C}$~\cite{Tan2022TAC}]
    Consider a CBF $h$ as defined in~\autoref{def:cbf-tan}. Then, any locally Lipschitz continuous controller $\pi : \R^n \to \R^m$ such that $\pi(x) \in \set{U}_{h,\gamma} (x)$ for all $x \in \set{C}$ will render the set $\set{C}$ positively control invariant for the system~\eqref{eq:nonlinear_affine_control}. 
\end{theorem}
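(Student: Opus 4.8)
The plan is to track the value of the barrier function along the closed-loop trajectory and show it can never become negative, via a scalar comparison argument of the Nagumo type. First I would observe that, since $\f$, $\g$, and $\pi$ are all locally Lipschitz continuous, the closed-loop vector field $\f(\x) + \g(\x)\pi(\x)$ is locally Lipschitz, so for each initial condition $\x_0 \in \set{C}$ there exists a unique maximal solution $\x(t)$ with $\x(0) = \x_0$, defined on the maximal interval $\set{T}_{\x_0}^+$. I would then define the scalar function $b(t) = h(\x(t))$ and differentiate it along the flow, obtaining $\dot{b}(t) = \mathcal{L}_\f h(\x(t)) + \mathcal{L}_\g h(\x(t))\,\pi(\x(t))$. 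Because the hypothesis gives $\pi(\x) \in \set{U}_{h,\gamma}(\x)$ whenever $\x \in \set{C}$, the defining inequality of $\set{U}_{h,\gamma}$ in~\eqref{eq:cbf-control-input-set} yields the differential inequality $\dot{b}(t) \geq -\gamma(b(t))$ at every time for which $\x(t) \in \set{C}$, i.e., for which $b(t) \geq 0$.

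Next I would compare $b$ against the scalar initial value problem $\dot{y} = -\gamma(y)$ with $y(0) = h(\x_0) \geq 0$. The key observation is that, since $\gamma$ is class-$\K_e$ with $\gamma(0) = 0$ and strictly increasing, the half-line $\{y \geq 0\}$ is forward invariant for this scalar flow: at $y = 0$ the right-hand side vanishes, and for $y > 0$ one has $\gamma(y) > 0$, so $y$ can only decrease toward $0$ without ever crossing it. A comparison bound then gives $b(t) \geq y(t) \geq 0$, hence $h(\x(t)) \geq 0$ and $\x(t) \in \set{C}$ for all $t \in \set{T}_{\x_0}^+$, which is exactly positive control invariance of $\set{C}$.

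The main obstacle is the apparent circularity: the differential inequality $\dot{b} \geq -\gamma(b)$ is only guaranteed while the trajectory is still inside $\set{C}$, so I cannot invoke the comparison step globally from the outset. I would resolve this with a first-exit-time argument, setting $t^\star = \inf\{t \in \set{T}_{\x_0}^+ : b(t) < 0\}$ and supposing $t^\star < \infty$; on $[0, t^\star)$ the inequality holds, so the comparison bound forces $b(t) \geq 0$ there, and by continuity $b(t^\star) = 0$ with $\dot{b}(t^\star) \geq -\gamma(0) = 0$, contradicting the assumption that $b$ becomes negative immediately after $t^\star$. A secondary subtlety is that $\gamma$ is only continuous and strictly increasing, not necessarily Lipschitz, so the comparison ODE may lack a unique solution; I would therefore deduce forward invariance of $\{y \geq 0\}$ directly from $\gamma(0) = 0$ rather than relying on solution uniqueness. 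Finally, since $\set{C}$ is the zero superlevel set of the continuous function $h$ and $\set{C} \subset \set{D}$, remaining in $\set{C}$ automatically keeps the trajectory inside the open domain $\set{D}$ on which the CBF condition of~\autoref{def:cbf-tan} is posed, so the inequality~\eqref{eq:cbf-control-input-set} is valid everywhere it is used.
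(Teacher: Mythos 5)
Since the paper states this theorem as a cited background result from~\cite{Tan2022TAC} and gives no proof of its own, your proposal has to stand on its own merits. Its overall architecture (closed-loop well-posedness, the scalar comparison system $\dot y = -\gamma(y)$, and your correct observation that forward invariance of $\{y \geq 0\}$ should be deduced from the sign structure of $\gamma$ rather than from solution uniqueness) is the standard one, but the step you use to break the circularity is genuinely wrong. From $b(t^\star) = 0$ and $\dot b(t^\star) \geq -\gamma(0) = 0$ you cannot conclude anything that contradicts ``$b$ becomes negative immediately after $t^\star$'': a nonnegative derivative at the single instant $t^\star$ is perfectly compatible with immediate decrease, as $b(t) = -(t-t^\star)^2$ shows. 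Because the hypothesis $\pi(x) \in \set{U}_{h,\gamma}(x)$ is imposed only on $\set{C}$, the differential inequality $\dot b \geq -\gamma(b)$ is available only while $b(t) \geq 0$; once $b$ dips below zero you have no lower bound on $\dot b$ at all, so neither the comparison lemma nor any pointwise argument can be restarted past $t^\star$. This is precisely the classical subtlety in barrier-function invariance proofs, and your write-up does not resolve it.

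The gap is not repairable as stated, because under the theorem's literal hypotheses the conclusion can actually fail. Take $n = m = 2$, $f \equiv 0$, $g = I$, $\set{U} = [-1,1]^2$, $h(x) = -\lVert x \rVert^2$, $\gamma(r) = r$, and $\set{D} = \{ \lVert x \rVert < 2 \}$. Then $\set{C} = \{0\}$, and $h$ satisfies every requirement of~\autoref{def:cbf-tan}: the inequality~\eqref{eq:cbf_lie_derivative} holds on $\set{D}$ since $\max_{u \in \set{U}} (-2x^\intercal u) = 2\lVert x \rVert_1 \geq \lVert x \rVert^2$ there, and least relative degree one is vacuous. Since $\mathcal{L}_g h(0) = 0$ and $\gamma(h(0)) = 0$, the set~\eqref{eq:cbf-control-input-set} at the unique point of $\set{C}$ equals all of $\set{U}$, so the constant (hence locally Lipschitz) controller $\pi \equiv (1,0)^\intercal$ meets every hypothesis; yet the closed-loop trajectory $x(t) = (t,0)^\intercal$ exits $\set{C}$ instantly, and $b(t) = -t^2$ is exactly the function above. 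Any correct proof therefore needs an assumption your proposal does not use: either require $\pi(x) \in \set{U}_{h,\gamma}(x)$ for all $x \in \set{D}$ rather than only on $\set{C}$, in which case your comparison argument does go through on $\set{D}$ with a short continuation argument (openness of $\set{D}$ plus closedness of $\set{C}$), or assume $\partial h(x)/\partial x \neq 0$ for all $x \in \partial\set{C}$ as in~\cite{ames2019a}, in which case $\dot b \geq 0$ holding on the \emph{whole} boundary is a subtangentiality condition and invariance follows from Nagumo's viability theorem (or a Brezis-type Gronwall estimate on the distance to $\set{C}$) --- machinery well beyond a pointwise derivative check. The degeneracy driving this counterexample, namely $\mathcal{L}_g h = 0$ on $\partial\set{C}$, is of course the very phenomenon this paper and~\cite{Alyaseen2025TAC} study.
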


For a 
policy $\pi(x)$ that is not initially designed to be safe, one can formulate a QP to modify the control input such that the system is guaranteed to be safe~\cite{ Ames2014}:
	\begin{align}
    \label{eqn:cbf_qp}
	\pi_{\text{cert}}(x, u_k) = &\underset{\u \in \set{U}_{h, \gamma}(x)}{\text{argmin}}  \quad \frac{1}{2} \lVert \u - u_k \rVert^2 
	\end{align}
Intuitively, a safety filter finds an input in $\set{U}_{h, \gamma}(x)$ that best matches $u_k$, 
based on a chosen distance measure~(e.g., the Euclidean norm in~\eqref{eqn:cbf_qp}).

We model the discretization error introduced by the ZOH as a disturbance. %
The closed-loop sampled data system is
\begin{equation}
\label{eq:sampled-data-sys}
    \dot{x}(t) = f(x(t)) + g(x(t)) \pi(x(t_k)) \,,
\end{equation}
where $t_k = k \Delta t$ with $k \in \N_0$. 
Based on the results in~\cite{Kolathaya2019}, a control policy $\pi(x)$ yields an input-to-state safe closed-loop system on $\set{C}_d = \left\{ x \in \R^n \, \vert \, h_d(x) = h(x) - d \geq 0\right\} \neq \emptyset$ for the sampled-data implementation in~\eqref{eq:sampled-data-sys} if the following holds:
\begin{equation}
    \label{eq:sampled-data-lie-derivative-cond}
    \begin{aligned}
        & \mathcal{L}_f h_{d}(x(t)) + \mathcal{L}_g h_{d}(x(t)) \pi(x(t_k)) \geq \\ 
        & \quad - \gamma(h_{d}(x(t))) - L_\pi M e( t, x(t_k), \pi(x(t_k)) \,,
    \end{aligned}
\end{equation}
where $t = t_k + \Delta t$, $M>0$ is such that $\lVert \mathcal{L}_g h(x) \rVert = \lVert \mathcal{L}_g h_d(x) \rVert \leq M$, and $e( t, x(t_k), \pi(x(t_k)) = \lVert x(t) - x(t_k) \rVert$. 
\begin{theorem}[Invariance of sampled-data systems~\cite{bahati2024}]\label{thm:inv-sampled-data}Let $h$ be a continuously differentiable function 
and the closed-loop sampled data system in~\eqref{eq:sampled-data-sys} be Lipschitz continuous with a state-feedback control policy $\pi$. Furthermore, let $\pi(x)$ yield an input-to-state safe closed-loop sampled data system on $\set{C}_d$. For each $d \in \left( 0, h_{\max} \right)$, where $h_{\max} = \max_{x \in \set{C}} h(x)$, there exists an upper bound $e_{\max} = -\frac{\gamma(-d)}{L_{\pi} M} > 0$ on the sample-and-hold error such that if $\lVert e \rVert_\infty \leq e_{\max}$, the set $\set{C}$ 
is positively control invariant under the control policy $\pi(x)$. 
\end{theorem}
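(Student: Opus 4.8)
The plan is to reduce the input-to-state safety hypothesis to a single scalar differential inequality along the closed-loop flow and then run a Nagumo-type forward-invariance argument for $\set{C} = \{x : h(x) \geq 0\} = \{x : h_d(x) \geq -d\}$, exploiting that $\partial \set{C}$ coincides with the level set $\{h_d = -d\}$. Before that, I would record the sign structure. Since $\gamma$ is a class-$\K_e$ function, it is strictly increasing with $\gamma(0) = 0$, so for every $d \in (0, h_{\max})$ we have $\gamma(-d) < 0$; together with $L_\pi, M > 0$ this shows $e_{\max} = -\gamma(-d)/(L_\pi M) > 0$, so the claimed bound is nonvacuous, and $d < h_{\max}$ ensures $\set{C}_d \neq \emptyset$ so that the hypothesis is meaningful.

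Second, I would collapse \eqref{eq:sampled-data-lie-derivative-cond} using the error bound. Because $e(t, x(t_k), \pi(x(t_k))) = \lVert x(t) - x(t_k) \rVert \le \lVert e \rVert_\infty \le e_{\max}$ holds for every $t$ within each sampling interval $[t_k, t_{k+1})$, the disturbance term obeys $-L_\pi M e \ge -L_\pi M e_{\max} = \gamma(-d)$. Since along the sampled-data flow $\dot h_d(x(t)) = \mathcal{L}_f h_d(x(t)) + \mathcal{L}_g h_d(x(t)) \pi(x(t_k))$, substitution into the hypothesis yields the comparison inequality $\dot h_d(x(t)) \ge -\gamma(h_d(x(t))) + \gamma(-d)$, which I would argue holds for all $t$ at which the trajectory lies in the open domain $\set{D}$ on which the hypothesis is posed.

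Third, I would establish invariance of $\set{C}$ from this inequality. Evaluating at the boundary, where $h_d = -d$, the right-hand side vanishes, so $\dot h_d \ge 0$ and the flow never points strictly outward across $\partial \set{C}$. To make this rigorous I would argue by contradiction: if a trajectory with $h_d(x(0)) \ge -d$ satisfied $h_d(x(t_2)) < -d$ for some $t_2$, then setting $t_1 = \sup\{t < t_2 : h_d(x(t)) = -d\}$ (well defined by continuity and the intermediate value theorem), we would have $h_d < -d$ strictly on $(t_1, t_2]$, hence $\gamma(h_d) < \gamma(-d)$ by strict monotonicity and therefore $\dot h_d \ge -\gamma(h_d) + \gamma(-d) > 0$ there. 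Integrating gives $h_d(x(t_2)) - h_d(x(t_1)) = \int_{t_1}^{t_2} \dot h_d(x(t))\, dt > 0$, so $h_d(x(t_2)) > h_d(x(t_1)) = -d$, a contradiction. Thus $h_d(x(t)) \ge -d$, i.e. $h(x(t)) \ge 0$, for all $t$, and $\set{C}$ is positively control invariant under $\pi$.

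The main obstacle is technical rather than conceptual: ensuring the differential inequality holds along the entire continuous trajectory and not merely at the sampling instants. This requires the sup-norm bound $\lVert e \rVert_\infty \le e_{\max}$ to control the sample-and-hold error uniformly across each interval, guaranteeing that the solution stays in $\set{D}$ without finite-time escape so that $h_d(x(\cdot))$ is well defined and continuously differentiable, and handling the degenerate boundary case $\dot h_d = 0$ at $h_d = -d$. The contradiction argument circumvents the last point by using the \emph{strict} inequality $\dot h_d > 0$ that strict monotonicity of $\gamma$ furnishes strictly below the boundary; a comparison-lemma formulation would reach the same conclusion but would additionally demand that $\gamma$ be locally Lipschitz for well-posedness of the comparison ODE, whereas the present route needs only continuity and strict monotonicity.
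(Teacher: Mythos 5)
Your proposal is correct, but note that the paper itself never proves this statement: \autoref{thm:inv-sampled-data} is quoted as background and attributed to~\cite{bahati2024}, so the comparison is against the standard argument in that cited literature (which descends from the input-to-state-safety framework of~\cite{Kolathaya2019}). Your route matches that standard argument in its skeleton --- bound the disturbance term $L_\pi M e$ by $-\gamma(-d)$ using $\lVert e \rVert_\infty \leq e_{\max}$, collapse \eqref{eq:sampled-data-lie-derivative-cond} into the scalar inequality $\dot h_d \geq -\gamma(h_d) + \gamma(-d)$ along the flow of \eqref{eq:sampled-data-sys}, and observe that the right-hand side vanishes exactly on $\{h_d = -d\} = \partial \set{C}$ --- but your endgame differs in a useful way: instead of invoking a comparison lemma (which requires regularity of $\gamma$ beyond continuity for the comparison ODE to be well posed) or Nagumo's theorem (which requires uniqueness of solutions and a nondegenerate gradient on $\partial\set{C}$), you use a first-crossing contradiction that needs only continuity and strict monotonicity of $\gamma$. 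That is a genuinely more elementary and slightly more general argument. The one point to be explicit about is the caveat you already flag: your contradiction step evaluates the differential inequality at states with $h_d < -d$, i.e., strictly \emph{outside} $\set{C}$, so the hypothesis of input-to-state safety must be read as holding on a domain $\set{D} \supset \set{C}$ (as it is in the ISSf literature), not merely on $\set{C}_d$ as the paper's loose phrasing ``on $\set{C}_d$'' might suggest; under the narrower reading your interval $(t_1, t_2]$ lies where the inequality is unavailable and the proof would stall at $\dot h_d \geq 0$ on the boundary alone, which without uniqueness or the outside inequality does not suffice. Since the theorem explicitly assumes the closed-loop sampled-data system is Lipschitz, an alternative patch under the narrow reading would be Nagumo plus uniqueness, but your construction as stated is the cleaner of the two once the domain of the hypothesis is fixed.
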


In~\cite{bahati2024}, Lipschitz continuity of the certified control policy in the sampled-data implementation is assumed, which can be restrictive in practice. This assumption can be relaxed by tightening the CBF condition in~\eqref{eq:cbf_lie_derivative} based on the sampling time and the system dynamics to still achieve control invariance~\cite{breeden2022-sampled}.
We further note that while we focus on the sampled-data effect, the notion of input-to-state safety can be extended to other types of practical disturbances~(e.g., model mismatch and measurement noise)~\cite{Kolathaya2019}.
\section{Methodology}
\label{sec:method}
In this section, we begin by investigating the cause and consequence of inactive CBF safety filters. Based on the insights, we then propose a method to mitigate inactive safety filter issues 
and detail how safety can be achieved in practice.

\subsection{Inactive CBF Safety Filters}

\subsubsection{Conditions for Inactivity}
We refer to a CBF safety filter being inactive at state $x$ when the constraint~\eqref{eq:cbf_lie_derivative} has no effect on the resulting certified control policy $\pi_{\text{cert}}(x)$, i.e., the constraint is redundant with $\set{U}_{h,\gamma}(x) = \set{U}$. At such states, the safety filter reduces to a projection of the policy $\pi(x)$ onto the control input set $\set{U}$. 
Clearly, the safety filter is inactive at $x$ when $\mathcal{L}_g h(x) = 0$. Then one of the following is satisfied: \textit{(i)} $g(x) = 0$~(the system is autonomous at $x$),~\textit{(ii)} $\frac{\partial h(x)}{\partial x} = 0$, or \textit{(iii)} $\frac{\partial h(x)}{\partial x} \perp g(x)$~(the gradient of $h$ is orthogonal to every column in $g(x)$). 
The Lie derivative $\mathcal{L}_gh(x)$ being zero is related to the system's relative degree at $x$. 
For convenience, denote the states where the relative degree 
$\rho = 1$
as $\set{D}_{\rho = 1} = \{x \in \R^n \, \vert \, \mathcal{L}_g h(x) \neq 0 \}$. We also define the set of states where the relative degree is not equal to one as $\set{D}_{\rho \neq 1} = \set{D} \setminus \set{D}_{\rho = q}$.

Besides $\mathcal{L}_g h(x) = 0$, the safety filter may also be effectively inactive when $\mathcal{L}_g h(x)$ has small magnitude.
To show this, we rewrite the CBF condition as 
    $\alpha^\intercal(x) u \leq \beta(x) \,,$
where $\alpha^\intercal(x) = - \frac{\mathcal{L}_g h(x)}{\lVert \mathcal{L}_g h(x) \rVert}$ with $\lVert \alpha \rVert  = 1$ and $\beta(x) = \frac{\gamma(h(x)) + \mathcal{L}_f h(x)}{\lVert \mathcal{L}_g h(x) \rVert}$.
This reveals that every CBF yields a state-dependent affine control input constraint in standard affine form.
The support function $\sigma_{\set{U}}(\alpha) = \max_{u \in \set{U}} \alpha^\intercal u$ is a linear program that measures how far the set $\set{U}$ extends in the direction of $\alpha$. 
Consequently, the safety filter is inactive if $\beta(x) > \sigma_{\set{U}}(\alpha(x))$ or, equivalently, if $\lVert \mathcal{L}_g h(x) \rVert < \frac{\gamma(h(x)) + \mathcal{L}_f h(x)}{\sigma_{\set{U}}(\alpha(x))} =: \epsilon(x)$.  

\subsubsection{Impacts of Inactive CBF Safety Filters}

Inactive safety filters have several practical implications.
First, as highlighted above, for states $\bar{x} \in \set{C}$ such that $\mathcal{L}_g h(\bar{x}) = 0$ the gradient of $h$ is orthogonal to every column $g(\bar{x})$. 
The input matrix $g(\bar{x})$ projects the control input $u$ onto a subspace of $\set{D}$ spanned by the columns of $g(\bar{x})$ that is tangential to the level set of $h(x)$ at $\bar{x}$. 
For any convex zero superlevel set $\set{C}_d$, where $d = h(\bar{x})$, the input's  contribution to the system dynamics $g(\bar{x})u$ at $\bar{x}$ points outside of $\set{C}_d$ for any $u \neq 0$. In particular, for any $\bar{x} \in \partial \set{C}$ and $u\neq 0$, we have that $g(x) u$ will point outside of the safe set $\set{C}$, see~\autoref{fig:vector-epsilon}. 
The drift term $f(x)$ may counteract or support this direction, depending on the system dynamics. 
We note that if any element of $\mathcal{L}_g h(x)$ is 0, the same issue would also arise for the corresponding input channel.

\begin{figure}
    \centering
    \begin{adjustbox}{trim=0 1.9cm 0 0,clip}
    \begin{tikzpicture}[->, thick, scale=2]
    \definecolor{green01270}{RGB}{0,127,0}
    \def\radius{1.0}
    \def\scale{0.6}
    \def\deltat{0.75}
    \def\sqrttwo{1.41421356237}
    \def\sqrtthree{1.73205080757}
    \def\sinfifteen{0.2588190451}
    \def\cosfifteen{0.96592582629}
    \def\px{-\radius * 1}
    \def\py{\radius * 0}
    \def\hx{- \scale * 2 * \px}
    \def\hy{- \scale * 2 * \py}
    \def\tconexp{- \hy}
    \def\tconeyp{\hx}
    \def\tconexm{\hy}
    \def\tconeym{-\hx}
    \def\gx{0.0}
    \def\gy{1}
    \def\gxd{\deltat * \deltat * 0.5}
    \def\gyd{\deltat}
    \def\fx{\py}
    \def\fy{0}
    \def\fxd{\px + \py * \deltat}
    \def\fyd{\py}
    \def\up{1.2}
    \def\um{-2}
    \def\xdotx{\fx + \gx * \up}
    \def\xdoty{\fy + \gy * \up}
    \def\xnextx{\fxd + \gxd * \up}
    \def\xnexty{\fyd + \gyd * \up}
  \fill[gray!30, opacity=0.8] (0,\radius) arc[start angle=90, end angle=270, radius=\radius] -- cycle;

    \draw[-, dashed, draw=gray] (\px,\py) -- (\px+\tconexp,\py+\tconeyp) node[pos=1,right] {}; 
    \draw[-, dashed, draw=gray] (\px,\py) -- (\px+\tconexm,\py+\tconeym) node[pos=1,right] {}; 

  \draw[-, draw=green01270] (0,0) -- (-1.3 * \radius,0) node[below, midway, text=green01270] {$\mathcal{L}_g h(x) = 0$};
  \draw[->, draw=blue, dashed, opacity=0.5] (\px,\py) -- (\px+\hx,\py+\hy) node[above, text=blue, opacity=0.5] {$\nabla h(x)$};
  \draw[->, draw=red, opacity=0.5] (\px,\py) -- (\px+\gx * \up,\py+\gy * \up) node[left, text=red, , opacity=0.5] {$B\bar{u}$};
  \draw[->, draw=red] (\px,\py) -- (\px+\gxd * \up,\py+\gyd * \up) node[midway, right, text=red] {$\hat{B}\bar{u}= \Delta x$};



\end{tikzpicture}
  \end{adjustbox}
%
 \hspace{0.5cm} 
 \begin{adjustbox}{trim=0 1.4cm 0 0,clip}
    \begin{tikzpicture}[->, thick, scale=2]
    \definecolor{green01270}{RGB}{0,127,0}
    \def\radius{1.0}
    \def\scale{0.6}
    \def\deltat{0.75}
    \def\sqrttwo{1.41421356237}
    \def\sqrtthree{1.73205080757}
    \def\sinfifteen{0.2588190451}
    \def\cosfifteen{0.96592582629}
    \def\px{-\radius * \cosfifteen}
    \def\py{\radius * \sinfifteen}
    \def\hx{- \scale * 2 * \px}
    \def\hy{- \scale * 2 * \py}
    \def\tconexp{- \hy}
    \def\tconeyp{\hx}
    \def\tconexm{\hy}
    \def\tconeym{-\hx}
    \def\gx{0.0}
    \def\gy{1}
    \def\gxd{\deltat * \deltat * 0.5}
    \def\gyd{\deltat}
    \def\fx{\py}
    \def\fy{0}
    \def\fxd{\px + \py * \deltat}
    \def\fyd{\py}
    \pgfmathsetmacro{\hTfvalue}{\hx * \fx + \hy * \fy}
    \pgfmathsetmacro{\hTgvalue}{\hx * \gx + \hy * \gy}
    \pgfmathsetmacro{\up}{-\hTfvalue / \hTgvalue}
    \def\um{-2}
    \def\xdotx{\fx + \gx * \up}
    \def\xdoty{\fy + \gy * \up}
    \def\xnextx{\fxd + \gxd * \up}
    \def\xnexty{\fyd + \gyd * \up}
  \fill[gray!30, opacity=0.8] (0,\radius) arc[start angle=90, end angle=270, radius=\radius] -- cycle;

    \draw[-, dashed, draw=gray] (\px,\py) -- (\px+\tconexp,\py+\tconeyp) node[pos=1,right] {}; 
    \draw[-, dashed, draw=gray] (\px,\py) -- (\px+\tconexm,\py+\tconeym) node[pos=1,right] {}; 
    
  \draw[-, draw=green01270] (0,0) -- (-1.3 * \radius,0) node[below, midway, text=green01270] {$\mathcal{L}_g h(x) = 0$};
  \draw[->, draw=blue, opacity=0.5, dashed] (\px,\py) -- (\px+\hx,\py+\hy) node[above, text=blue, opacity=0.5] {$\nabla h(x)$};
  \draw[->, draw=red, opacity=0.5] (\px,\py) -- (\px+\gx * \up,\py+\gy * \up) node[left, text=red, opacity=0.5] {$Bu$};
  \draw[->, draw=red] (\px,\py) -- (\px+\gxd * \up,\py+\gyd * \up) node[midway, left, text=red] {$\hat{B}u$};

  \draw[->, thick, draw=magenta] (\px,\py) -- (\fxd,\fyd) node[right, text=magenta] {$\hat{A}x$};

  \draw[-, dashed] (\fxd,\fyd) -- (\xnextx,\xnexty) node[midway,right] {};
  \draw[-, dashed] (\px+\gxd * \up,\py+\gyd * \up) -- (\xnextx,\xnexty) node[midway,right] {};
  \draw[->, thick] (\px,\py) -- (\xnextx,\xnexty) node[pos=1,right] {$\Delta{x}$}; 

\end{tikzpicture}
    \end{adjustbox}
    \caption{
The effect of $\lVert \mathcal{L}_g h(x) \rVert$ on safety filter activity for a double integrator system on an ellipsoidal safe set (gray). We use $\Delta t = \SI{0.75}{\second}$ and uncertified policy $\pi(x) = \bar{u} = \max_{\set{U}} \lVert u \rVert$. When $\mathcal{L}_g h(x) = 0$~(safety filter inactive, left), $\pi(x) = \bar{u}$ is allowed. When $\lVert \mathcal{L}_g h(x) \rVert \neq 0$~(right), $\bar{u}$ is not allowed, though the certified input $u$ remains large. Both cases can cause safe set violations in sampled-data implementation.
    }
    \label{fig:vector-epsilon}
\end{figure}

Second, if the safety filter is inactive at $\bar{x}$, any control policy $\pi(\bar{x}) = u_{\text{cert}} \in \set{U}$ is feasible, see~\autoref{fig:vector-epsilon}. 
Again, if any element of $\mathcal{L}_g h(x)$ is 0, then the associated element of the input vector is also not constrained by the CBF. 
An example of a feasible but aggressive control policy is a policy that tries to maximize the magnitude of $g(x)u$ with $\pi_{+}(x) = \arg \max_{u \in \set{U}} \lVert g(x) u \rVert \in \set{U}$.
The control policy $\pi_{+}(\bar{x})$ is certified by the optimization problem in~\eqref{eqn:cbf_qp} since it is feasible.
 If the control input set $\set{U}$ is large, i.e., the control input constraints are loose, the control input $u_{\text{cert}} = \pi_{+}(\bar{x})$ is also large in magnitude.
Since the set $\set{U}$ is time- and state-invariant, it already has to be chosen sufficiently large such that at least the CBF condition in~\eqref{eq:cbf_lie_derivative} is feasible for all $x \in \set{C}$. 

Furthermore, while the safety filter is formulated for a continuous-time system, its implementation %
in real-world scenarios is in discrete time.
This yields a sampled-data setting since we can only apply control inputs at time intervals with duration $\Delta t > 0$. 
The state $x(t_k + \tau)$ is computed under the application of $u_{\text{cert},k}$ at time $t_k$ on the interval $\tau \in \left[0, \Delta t\right)$. 
At time $t_k$, the control input $u_{\text{cert},k} \in \set{U}_{h,\gamma}(x(t_k))$ is certified for state $x(t_k)$ using~\eqref{eqn:cbf_qp}.

In the following, we discuss the consequences of the above. 
The sampled-data implementation affects the safety filter at all states $x$ for all policies $\pi$.  
This in itself can already lead to unsafe states, $x \notin \set{C}$. While $u_{\text{cert},k}$ is certified to be a safe control input at time $t_k$, it may not be a safe control input for $x(t_k + \tau)$ for $\tau \in \left(0, \Delta t \right)$.
Now consider a state $\bar{x}$ where the safety filter is inactive and $\bar{x} \in \partial \set{C}$. 
In this case, any nonzero control input $u$ contributes to $\dot{x}$ at $\bar{x}$ pointing outside of the safe set $\set{C}$. In combination with a sampled-data implementation, this can lead to taking a step in an unsafe direction for a time step of $\Delta t > 0$. 
This can be seen from the following example. Consider a double integrator system of the form $\dot{x} =  Ax + Bu$. The CBF candidate is $h(x) = 1 - x^\intercal P x$, where $P = \text{diag}(p_1, p_2) \succ 0$. The resulting safe set is $\set{C} = \{x \in \R^2 \:\vert\: h(x) \geq 0\}$, which is an ellipsoid. Exact-discretization with a zero-order hold control input $u(t_k) = u_k \in \R$ yields
        $x(t_k + \Delta t) = x_{k+ 1} = \hat{A} x_k + \hat{B} u_k $,
where $\hat{A} = \exp{(A \Delta t)}$, $\hat{B} = \left(\int_{0}^{\Delta t} \exp{(A \tau)}d \tau \right) B$, and $\Delta t > 0$. 
For $x_k = \left[ -1/\sqrt{p_1} \quad 0 \right]^\intercal$,
$h(x_k) = 0$ and $\mathcal{L}_gh(x_k) = 0$, the change in $h$ is $h(x_{k + 1}) - h(x_k) = h(x_{k + 1}) = \eta u_k^2 + \zeta u_k$ with $\eta = - (\Delta t)^2 (p_2 + (\Delta t)^2 p_1 /4)$ and $\zeta = (\Delta t)^2 \sqrt{p_1}$,
which is quadratic in $u_k$. 
For any $u_{k}< 0$ or $u_{k} > \frac{ 4 \sqrt{p_1}}{4 p_2 + p_1 (\Delta t)^2}$, under the assumption of $u_k \in \set{U}$, this yields $h(x_{k + 1}) < 0$, such that the system leaves the safe set, see~\autoref{fig:vector-epsilon}. 
Finally, we examine the scenario where, in addition to the above, the desired control policy also minimizes $h(x_{k + 1})$.  
For the double integrator example, either the largest or the smallest feasible control input $u_{\text{cert}, k} \in \set{U}$ 
minimizes the value of $h$ at time step $(k + 1)$. 
This highlights how inactive safety filters can lead to safe set violations. 
Once the system reaches a state at which $\lVert \mathcal{L}_gh(x) \rVert$ is sufficiently large, the safety filter becomes active again and guides the system back into the safe set $\set{C}$. If this behavior repeats multiple times, the system is exhibiting chattering, which induces high-frequency control inputs that are typically undesirable in real-world systems. 
We briefly note that these effects also apply to other states where the safety filter is inactive and not just $\mathcal{L}_g h(x) = 0$~(e.g., for states where $\lVert \mathcal{L}_g h(x) \rVert \leq \epsilon(x)$). Furthermore, states with active safety filters that are in a neighborhood of the boundary may also yield safety violations.  

\subsubsection{Prevalence of Inactive CBF Safety Filters} 
After discussing the causes and the undesirable effects of inactive safety filters, we next highlight for which systems and CBF candidates the problem of inactive safety filters may be encountered. As it turns out, inactive safety filters are common for many physical systems with compact safe sets $\set{C}$. %

Our next example shows that a commonly chosen CBF in the literature does not have relative degree $\rho=1$ but rather least relative degree $\rho=1$ on $\set{D}$, which results in safety filter inactivity. 
    Consider linear system dynamics of the form $\dot{x} =  Ax + Bu$. The CBF candidate is $h(x) = 1 - x^\intercal P x$, where $P$ is positive definite. For the first-order Lie derivative with respect to $g$, we have
        $\mathcal{L}_g h(x) = 
        - 2 x^\intercal P B\,,$
    which is state-dependent.
    According to~\autoref{def:rel-degree}, the set of states where the relative degree is one and not equal to one are $\set{D}_{\rho = 1} = \{ x \in \R^2 \,\vert\, x^\intercal P B \neq 0\}$ and $\set{D}_{\rho \neq 1} = \{ x \in \R^2 \,\vert\, x^\intercal P B = 0\}$, respectively.
    In this example, $\set{D}_{\rho \neq 1}$ 
    is the set of states that are orthogonal to $PB$.

Although in this simple case, the Lie derivative along $g$ is linear in $x$, the Lie derivative along $g$ may be an arbitrary nonlinear function of $x$ in general.
This example highlights that the inactivity has to be carefully analyzed for nonlinear systems.
The inactivity 
can be evaluated by determining the appropriate Lie derivatives. However, it is generally not possible to analytically determine $\set{D}_{\rho = 1}$,
except for certain simple cases. Instead, sampling techniques must be used, or a set of nonlinear programs must be solved.

Building on the results in~\cite{Alyaseen2025TAC}, we can show that there exist states such that the associated safety filter will be inactive. 
\begin{corollary}
    \label{theorem:cbf_over_compact_sets}
Consider dynamics in~\eqref{eq:nonlinear_affine_control} with constant $g(x) = B \in \R^{n \times m}$ and a compact set $\set{C}\subseteq \set{D}$ that is parameterized as the zero-superlevel set of a continuously differentiable function $h:\set{D}\mapsto\set{R}$ with $\partial \set{C} = \{x\in\set{D}\:\:|\:\: h(x) = 0\}$.
The gradient satisfies $\partial h(x) / \partial x \neq 0 $ for all $x\in\partial \set{C}$. 
For any $d$-level set $\partial \set{C}_d$ with $d \in (0, h_{\max})$, there exists at least one point $x \in \partial \set{C}_d$, where $\partial h(x) / \partial x \perp B$ such that the associated safety filter in~\eqref{eqn:cbf_qp} is inactive. 
\end{corollary}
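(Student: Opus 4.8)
The plan is to realize the desired point as the maximizer of a well-chosen linear functional over the compact superlevel set and then read off the orthogonality from first-order optimality. First I would pick a nonzero vector $w \in \R^n$ with $w^\intercal B = 0$, i.e.\ a vector orthogonal to the column space of $B$; such a $w$ exists exactly in the regime $\operatorname{rank}(B) < n$ in which boundary inactivity can occur at all, since a full-rank $B$ would make $\frac{\partial h}{\partial x}B = 0$ equivalent to $\frac{\partial h}{\partial x} = 0$ and hence preclude the claimed point. I would then work with $\set{C}_d = \{x \in \R^n \mid h(x) \geq d\}$. For $d \in (0, h_{\max})$ this set is nonempty, as a maximizer of $h$ over $\set{C}$ attains $h_{\max} > d$ and hence lies in $\set{C}_d$, and it is compact, being a closed subset of the compact set $\set{C}$.

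Next I would maximize the linear functional $\ell(x) = w^\intercal x$ over $\set{C}_d$; a maximizer $x^\star$ exists by compactness. I would first argue that the constraint is active at $x^\star$, i.e.\ $h(x^\star) = d$ so that $x^\star \in \partial \set{C}_d$: if instead $h(x^\star) > d$, then $x^\star$ would be an interior point of $\set{C}_d$ (it lies in the open set $\{h > d\}$), and the first-order condition for an interior maximum of the differentiable map $\ell$ would force $\nabla \ell(x^\star) = w = 0$, contradicting $w \neq 0$.

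Finally I would extract the orthogonality at $x^\star$ via a case split, which is the step requiring care because the hypothesis only guarantees $\partial h / \partial x \neq 0$ on the zero level set $\partial \set{C}$, not on $\partial \set{C}_d$. If $\frac{\partial h}{\partial x}(x^\star) = 0$, then $\frac{\partial h}{\partial x}(x^\star) B = 0$ trivially. Otherwise the single active inequality satisfies the linear independence constraint qualification, so KKT stationarity gives $w = \lambda \, \big(\tfrac{\partial h}{\partial x}(x^\star)\big)^\intercal$ for some scalar $\lambda$; since $w \neq 0$ we have $\lambda \neq 0$, whence $\frac{\partial h}{\partial x}(x^\star)$ is a nonzero multiple of $w^\intercal$ and therefore $\frac{\partial h}{\partial x}(x^\star) B = \lambda^{-1} w^\intercal B = 0$. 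In both cases $\mathcal{L}_g h(x^\star) = \frac{\partial h}{\partial x}(x^\star) B = 0$, so by the inactivity discussion above $\set{U}_{h,\gamma}(x^\star) = \set{U}$ and the filter in~\eqref{eqn:cbf_qp} is inactive at $x^\star \in \partial \set{C}_d$. I expect this gradient-degeneracy issue to be the main obstacle: a naive Lagrange-multiplier argument on a generic level set is not licensed by the stated hypotheses, and it is precisely the separate (and immediate) treatment of $\frac{\partial h}{\partial x}(x^\star) = 0$ that both closes this gap and rules out the functional being maximized at a critical point of $h$.
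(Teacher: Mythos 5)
Your proof is correct and is essentially the same argument the paper relies on: the paper's proof is a one-line deferral to the extremal argument of~\cite{Alyaseen2025TAC} ``repeated for each level set,'' which is exactly your construction --- maximize $w^\intercal x$ for a nonzero $w$ orthogonal to the columns of $B$ (implicitly assuming $\operatorname{rank}(B) < n$, as you correctly flag) over the compact set $\set{C}_d$, conclude the maximizer $x^\star$ lies on $\partial \set{C}_d$, and read off $\mathcal{L}_g h(x^\star) = 0$ from first-order optimality. Your explicit case split for $\partial h(x^\star)/\partial x = 0$ is a worthwhile refinement rather than a divergence, since the nonvanishing-gradient hypothesis is only stated on $\partial\set{C}$ and the paper's per-level-set repetition silently passes over this (harmless) degeneracy, which still yields an inactive filter because $\mathcal{L}_g h(x^\star) = 0$ holds trivially there.
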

\begin{proof}
    The proof follows by repeating the argument from~\cite{Alyaseen2025TAC} for each level set. 
\end{proof}

In our example above, each level set had exactly two states such that $\mathcal{L}_g h(x) = 0$. 
In practice, many systems satisfy the condition of a constant $g(x)$, and the requirement of a compact safe set is ubiquitous in the field of safety-critical control. 
While the result only applies to systems with a constant input matrix, safety filter inactivity can also exist for systems with a state-dependent input matrix $g(x)$. 

\subsection{Multiple CBFs to Prevent Inactive Safety Filters}

In this section, we propose a method to circumvent inactive safety filters. 
In particular, we leverage multiple CBFs to achieve a safety filter design that is active for all states $x$ in the safe set. 
Based on the previous section, one requirement for an active safety filter is that $\mathcal{L}_g h(x) \neq 0$ for all $x$ or, in other words, $\xinactive = \emptyset$. We can further require that $\lVert \mathcal{L}_g h(x) \rVert \geq \epsilon$, where $\epsilon > 0$ can be chosen such that at least one CBF condition is not redundant. As this may not be satisfied by a single CBF~(see \autoref{theorem:cbf_over_compact_sets}), 
we use a set of $K\in \N$ CBF candidates instead. 
The safe set is then the positive superlevel set of all CBFs: $\set{C}_{\set{Z}_{1,K}}$.
We emphasize that the safe set resulting from intersecting multiple CBFs can be designed such that any desired safe set $\set{X}$
may be approximated.  
Using this notation, we also define $\set{D}_{\epsilon} = \set{D} \setminus \bigcap_{i \in \set{Z}_{1,K}} \set{D}_\epsilon^i$ with $ \xineps^i = \{x \in \R^n \, \vert \, \lVert \mathcal{L}_g h_i(x) \rVert \leq \epsilon \}$ and $\set{U}_{(h_i, \gamma_i)_{i \in \set{Z}_{1,K}}}(x) = \bigcap_{i \in \set{Z}_{1,K} } \set{U}_{(h_i, \gamma_i)}(x)$ with $\set{U}_{(h_i, \gamma_i)}(x)  = \{ u \in \set{U} \, \vert \, \alpha_i(x) u \leq \beta_i(x) \}$. 

We formally state how multiple CBFs can prevent the safety filter inactivity issue in the following theorem: 
\begin{theorem}[$\epsilon$-active CBF safety filter]
\label{th:constrained-input}
    Consider the system dynamics in \eqref{eq:nonlinear_affine_control}, a compact control input set $\set{U}$, and an associated set of multiple CBFs 
    $h_i: \set{D} \to \R $, with $i \in \set{Z}_{1, K}$ and $K \in \N$ and a given small positive number $\epsilon > 0$. 
    Let the systems consisting of the system dynamics and each CBF has a least relative degree $\rho_i = 1$. 
    If the intersection of all states for which the norm of the Lie derivative along $g$ is less than or equal to $\epsilon$ for each CBF is empty, 
        $\xineps = \cap_{i \in \set{Z}_{1, K}} \xineps^i   
        =  \emptyset \,,$
then, for all $x \in \set{D}$, it holds that $\max_{i \in \set{Z}_{1,K}} \lVert \mathcal{L}_g h_i(x) \rVert > \epsilon$. Furthermore, if 
\begin{equation*}
    \max_{i \in \set{Z}_{1,K}} \lVert \mathcal{L}_g h_i(x) \rVert > \bar{\epsilon} := \max_{i\in \set{Z}_{1,K}} \frac{M_{f_i} + \gamma_i(h_{i, \max})}{ r- \lVert u_c \rVert}\,,
\end{equation*}
where $M_{f_i} > 0$ with $M_{f_i} \geq \lVert \mathcal{L}_f h_i(x) \rVert$, and $u_c$ and $r > 0$ are the Chebyshev center and radius, respectively, such that $\set{B}(u_c, r)$ is the largest ball satisfying $\set{B}(u_c, r) \subseteq \set{U}$, and $r - \lVert u_c \rVert > 0$, 
then 
        $\set{U}_{(h_i, \gamma_i)_{i \in \set{Z}_{1,K}}}(x)
        \subset \set{U}$.
\end{theorem}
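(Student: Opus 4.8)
The plan is to prove the two assertions in turn. The first, that $\max_{i \in \set{Z}_{1,K}} \lVert \mathcal{L}_g h_i(x) \rVert > \epsilon$ for all $x$, is an immediate set-theoretic consequence of the emptiness hypothesis; the second, that the certified input set is a \emph{strict} subset of $\set{U}$, is the substantive part, and I would establish it by exhibiting a single input $u^\ast \in \set{U}$ that violates at least one of the CBF constraints. For the first claim: since $\xineps = \cap_{i \in \set{Z}_{1,K}} \xineps^i = \emptyset$, no point of $\set{D}$ lies in the intersection, so for every $x \in \set{D}$ there is an index $i$ with $x \notin \xineps^i$; by the definition $\xineps^i = \{x \mid \lVert \mathcal{L}_g h_i(x)\rVert \leq \epsilon\}$ this means $\lVert \mathcal{L}_g h_i(x)\rVert > \epsilon$, and taking the maximum over $i$ gives the claim.

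For the second claim, fix $x$ satisfying the hypothesis and let $i^\ast \in \arg\max_{i} \lVert \mathcal{L}_g h_i(x)\rVert$, so that $\lVert \mathcal{L}_g h_{i^\ast}(x)\rVert > \bar{\epsilon}$. I would work with the affine rewriting $\alpha_{i^\ast}^\intercal(x) u \leq \beta_{i^\ast}(x)$ of the $i^\ast$-th CBF constraint, where $\lVert \alpha_{i^\ast}\rVert = 1$, and use the fact, noted earlier, that this constraint strictly removes a point of $\set{U}$ exactly when $\beta_{i^\ast}(x) < \sigma_{\set{U}}(\alpha_{i^\ast}(x))$. The proof then reduces to a lower bound on the support function and an upper bound on $\beta_{i^\ast}$. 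For the support function, since $\set{B}(u_c, r) \subseteq \set{U}$ and $\lVert \alpha_{i^\ast}\rVert = 1$, the support function of the Euclidean ball gives $\sigma_{\set{U}}(\alpha_{i^\ast}(x)) \geq \alpha_{i^\ast}^\intercal(x) u_c + r \geq r - \lVert u_c\rVert$, where the last step is Cauchy--Schwarz. For $\beta_{i^\ast}$, bounding $\mathcal{L}_f h_{i^\ast}(x) \leq M_{f_{i^\ast}}$ and using that $\gamma_{i^\ast}$ is strictly increasing with $h_{i^\ast}(x) \leq h_{i^\ast,\max}$ yields $\beta_{i^\ast}(x) \leq (M_{f_{i^\ast}} + \gamma_{i^\ast}(h_{i^\ast,\max}))/\lVert \mathcal{L}_g h_{i^\ast}(x)\rVert$.

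It remains to combine these. Since $\lVert \mathcal{L}_g h_{i^\ast}(x)\rVert > \bar{\epsilon} \geq (M_{f_{i^\ast}} + \gamma_{i^\ast}(h_{i^\ast,\max}))/(r - \lVert u_c\rVert)$ and $r - \lVert u_c\rVert > 0$, cross-multiplying and dividing by $\lVert \mathcal{L}_g h_{i^\ast}(x)\rVert > 0$ gives $\beta_{i^\ast}(x) < r - \lVert u_c\rVert \leq \sigma_{\set{U}}(\alpha_{i^\ast}(x))$. Because $\sigma_{\set{U}}(\alpha_{i^\ast}(x)) = \max_{u\in\set{U}} \alpha_{i^\ast}^\intercal(x) u$, its maximizer $u^\ast \in \set{U}$ satisfies $\alpha_{i^\ast}^\intercal(x) u^\ast > \beta_{i^\ast}(x)$, so $u^\ast \notin \set{U}_{(h_{i^\ast},\gamma_{i^\ast})}(x) \supseteq \set{U}_{(h_i,\gamma_i)_{i \in \set{Z}_{1,K}}}(x)$, witnessing $\set{U}_{(h_i,\gamma_i)_{i \in \set{Z}_{1,K}}}(x) \subset \set{U}$. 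I expect the main obstacle to be the uniform lower bound in step one: one needs a direction-independent floor on $\sigma_{\set{U}}(\alpha_{i^\ast}(x))$ valid for every unit $\alpha_{i^\ast}$, which is precisely what the inscribed Chebyshev ball together with the standing assumption $r - \lVert u_c\rVert > 0$ (equivalently $0 \in \mathrm{int}\,\set{U}$) supplies; the other point requiring care is aligning the argmax index $i^\ast$ with the maximization defining $\bar{\epsilon}$, and checking that $h_{i^\ast}(x) \leq h_{i^\ast,\max}$ holds on the domain where the hypothesis is invoked.
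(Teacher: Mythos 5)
Your proof is correct and follows essentially the same route as the paper's: the first claim by the identical set-theoretic argument, and the second by upper-bounding $\beta_i(x)$ via $M_{f_i} + \gamma_i(h_{i,\max})$ and lower-bounding the support function via the inscribed Chebyshev ball, $\sigma_{\set{U}}(\alpha_i) \geq \alpha_i^\intercal u_c + r \geq r - \lVert u_c \rVert$. The only difference is presentational: you make the strict inclusion explicit by exhibiting the support-function maximizer as a witness in $\set{U}$ that is excluded from $\set{U}_{(h_i,\gamma_i)_{i\in\set{Z}_{1,K}}}(x)$ (and you flag the implicit assumption $h_{i}(x) \leq h_{i,\max}$), details the paper leaves implicit in its appeal to the earlier inactivity criterion $\lVert\mathcal{L}_g h_i(x)\rVert > \epsilon_i(x)$.
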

\begin{proof}
    The result that, $\max_{i \in \set{Z}_{1,K}} \lVert \mathcal{L}_g h_i(x) \rVert > \epsilon$, follows directly from the definition of the intersection. For each $x \in \set{D}$, we know that, for at least one $i$, $\mathcal{L}_g h_i(x)$ is nonzero, $\exists i \in \set{Z}_{1, K}, \lVert \mathcal{L}_g h_i(x) \rVert > \epsilon$. Therefore, there exists at least one nonzero half-space constraint such that $\bigcap_{i \in \set{Z}_{1, K}} \{ u \in \R^m \, \vert \, \alpha_i(x) u \leq \beta_i(x) \} \subset \R^m$.  
    As stated at the beginning of the section, to guarantee that the safety filter is always active, $\set{U}_{(h_i, \gamma_i)_{i \in \set{Z}_{1,K}}}(x)
        \subset \set{U}$, we require for at least one $i \in \set{Z}_{1,K}$ that
        $\lVert \mathcal{L}_g h_i(x) \rVert > \epsilon_i(x)$.
    We upper-bound the right-hand side
    using $\sigma_{\set{U}}(\alpha_i) \geq \alpha_i^\intercal u_c + r \lVert \alpha_i \rVert$
    as follows:
    \begin{align*}
        \epsilon_i(x) \leq 
        \frac{M_{f_i} + \gamma_i(h_{i, \max})}{\alpha^\intercal_i(x) u_c + r } 
        \leq \frac{(M_{f_i} + \gamma_i(h_{i, \max})) \lVert \mathcal{L}_g h_i(x) \rVert}{ ( r- \lVert u_c \rVert) \lVert \mathcal{L}_g h_i(x) \rVert } \,.  
    \end{align*}
    Since $r - \lVert u_c \rVert  > 0$
    and $\lVert \mathcal{L}_g h_i(x) \rVert \neq 0$, 
    maximizing over all $i \in \set{Z}_{1, K}$ yields the desired result.
\end{proof}

In the above, we have shown how we can use multiple CBFs to prevent safety filter inactivity. In particular, the parameter $\epsilon > 0$ trades off conservatism in the feasible inputs and safety filter inactivity. If $\epsilon$ is sufficiently large, the safety filter is active for all states $x \in \set{C}$. 
The derived value for $\epsilon$ can lead to conservative or infeasible input sets in practice. The result provides a theoretical upper bound for $\epsilon$ that may guide the selection of the $K$ CBFs. 
Finally, if $\set{D}_\epsilon = \emptyset$ for some $\epsilon > 0$, then the associated safety filter~(even with $\set{U} = \R^m$) prevents the issue of unbounded $u_{\text{cert},k}$, which was analyzed in~\cite{Alyaseen2025TAC}.  

\subsection{Synthesizing Multiple CBFs}
\label{sec:synthesis}
In this section, we extend the CBF synthesis problem to multiple CBFs that account for the system's least relative degree.  
The goal is to synthesize the largest intersection of $K$ zero-superlevel sets $\set{C}_{\{i\}} = \{ x \in \R^n \vert h(x;\theta_i) \geq 0 \}$ that are contained inside a desired set of state constraints $\set{X}$. These multiple CBFs are required to satisfy the CBF condition, and the Lie derivative $\mathcal{L}_\vec{g} h_i(x)$ may never be close to zero. We express this problem using the following optimization problem:  
    \begin{align}
    \raisetag{70pt} 
    \label{eqn:multi-cbf-synthesis-general}
	\begin{split}
	\underset{\{\theta_i \in \R^{p}, \phi_i \in \R^{q}\}_{i=1}^{K}}{\text{max}} & \quad \text{Vol}(\set{C}_{\set{Z}_{1, K}}) \\ \text{s.t.} 
    & \quad \set{C}_{\set{Z}_{1, K}} = \bigcap_{i \in \set{Z}_{1, K}} \set{C}_{\{i\}}\subseteq \set{X} 
    \,, \\
    & \quad \max_{i \in \set{Z}_{1, K}}\lVert \mathcal{L}_\vec{g} h(x; \theta_i) \rVert \geq \epsilon\,, \forall x \in \set{C}_{\set{Z}_{1, K}} 
    \,, \\
    & \quad \bigcap_{i \in \set{Z}_{1, K}}\set{U}_{(h_i, \gamma_i)}(x) \neq \emptyset\,, \forall x \in \set{C}_{\set{Z}_{1, K}}\,,
	\end{split}
    \end{align}
where $\theta_i$ and $\phi_i$ parameterize $h_i$ and $\gamma_i$, respectively, $\text{Vol}$ is the set's volume, $\epsilon > 0$, and $K > 1$ are user-defined parameters. 

The above optimization problem 
includes multiple semi-infinite constraints, and there is no analytical expression for the volume of the intersection of general superlevel sets. 
Therefore, we propose to use a sampling-based approach to determine a feasible solution. Our proposed sampling approach may also be implemented as an evolutionary algorithm, which was proposed for synthesizing a single CBF in~\cite{Wei2023}. 

To reduce the search space, we sample the function parameters from compact sets $\Theta$ and $\Phi$, respectively. For every sampling iteration $k$, we sample parameters $\theta_{i,k} \in \Theta$ and check if $\set{C}_{\set{Z}_{1, K},k} \subseteq \set{X}$. If $\set{X}$ itself is the zero superlevel set of a continuously differentiable function $h_{\set{X}}(x)$, this can be verified by 
        $\min_{x \in \set{C}_{\set{Z}_{1, K},k}}~ h_{\set{X}}(x) \geq 0 \,.$
Then for each function $h_{i,k}$, we determine the norm on the Lie derivative and test if $\max_{i \in \set{Z}_{1,K}} \lVert \mathcal{L}_g h(x; \theta_{i,k}) \rVert \geq \epsilon$ for a sufficiently large set of sampled states $x \in \set{C}_{\set{Z}_{1,K},k}$. Alternatively, we can check the sufficient condition $\min_{x \in \set{C}_{\set{Z}_{1,K},k}} \lVert \mathcal{L}_g h(x; \theta_{i,k}) \rVert \geq \epsilon$ for all $i \in \set{Z}_{1,K}$ using a nonlinear programming solver. Finally, we sample a parameter $\phi_{i, k_j} \in \Phi$ and need to verify the feasibility of the set of constraints $\mathcal{L}_\vec{f} h(\x, \theta_{i,k}) + \mathcal{L}_\vec{g} h(\x; \theta_{i,k}) \u \geq - \gamma(h(\x;\theta_{i,k}); \phi_{i,k_j})$, $ \forall x \in \set{C}_{\set{Z}_{1, K},k}$, $ \forall u \in \set{U}$, $ \forall i \in \set{Z}_{1,K} \,.$ This is a challenging problem for general nonlinear systems and CBFs~\cite{brunke-lcss-2024}, which often can only be determined by sampling sufficiently many points in $\set{C}_{\set{Z}_{1, K},k}$ or the dynamics and constraints have a special structure that simplifies the computation~\cite{Dai2022}. In case the feasibility cannot be guaranteed, we can sample a new parameter $\phi_{i, k_{j+1}} \in \Phi$ and execute the feasibility check again. For every successful sample, we approximate the volume of set $\set{C}_{\set{Z}_{1, K}}$, for example, through its bounding box or rejection sampling. Finally, we pick the parameters $\theta_{i,k^*}$ and $\phi_{i,k^*}$, that lead to the largest surrogate of the true volume of $\set{C}_{\set{Z}_{1, K},k^*}$. 

\subsection{Multiple CBFs for Safety under Sampled-Data Control}
In this section, we analyze when a safety filter with multiple CBFs yields safety for discrete-time implementations~(i.e., using sampled-data control). 

We define $\bar{u} = \max_{u \in \set{U}} ~ \lVert u \rVert$ as the maximum feasible control input. 
A Lipschitz constant for the control affine system's dynamics in~\eqref{eq:nonlinear_affine_control} can be determined as follows:
    \begin{align*}
        \lVert f(x) + g(x) u - f(y) - g(y) u \rVert
        \leq (L_f + L_g \bar{u}) \lVert x - y \rVert \,.
    \end{align*}
We define $L := (L_f + L_g \bar{u})$.
Using this Lipschitz constant, we can bound the difference between the current and future states after applying control input $u$ for sampling time $\Delta t$. Consider the following slightly adapted result to explicitly account for the state and control input dependency: 
\begin{proposition}[Bounded sampled-and-hold deviation~\cite{yang2019}]
\label{thm:bounded-error}
    Consider the control-affine system defined in \eqref{eq:nonlinear_affine_control}. Suppose at time $t_0 \geq 0$, a constant control input $u$ is applied to the system for a period of $\Delta t = t - t_0 \geq 0$. The distance between the future state $x(t)$ and the state $x_{t_0} = x(t_0)$ is $e(t, x_{t_0}, u) = \lVert x(t) - x_{t_0} \rVert$. Let 
        $\Bar{e}(t, x_{t_0}, u) := \frac{1}{L} \lVert f(x_{t_0}) + g(x_{t_0})u\rVert \left( \exp (L \Delta t) - 1 \right)$. %
    Then, 
        $e(t, x_{t_0}, u) \leq \Bar{e}(t, x_{t_0}, u)$
    for all $ t \geq t_0$.
\end{proposition}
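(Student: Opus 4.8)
The plan is to recognize this as a standard Grönwall-type estimate for the flow of the frozen-input vector field, and to carry it out in three steps: reduce to an integral inequality, bound the integrand via the already-established Lipschitz constant, and close the loop with the integral form of Grönwall's lemma.

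First, I would freeze the control input and define the autonomous vector field $F(x) := f(x) + g(x)u$, so that the trajectory on $[t_0, t]$ solves $\dot{x}(s) = F(x(s))$. Since $u \in \set{U}$ implies $\lVert u \rVert \le \bar{u}$, the displayed Lipschitz computation preceding the proposition gives that $F$ is $L$-Lipschitz, i.e.\ $\lVert F(x) - F(y)\rVert \le L\lVert x - y\rVert$ with $L = L_f + L_g\bar{u}$. Writing $\delta(s) := e(s, x_{t_0}, u) = \lVert x(s) - x_{t_0}\rVert$ and invoking the fundamental theorem of calculus,
\[ \delta(t) = \left\lVert \int_{t_0}^{t} F(x(s))\, ds \right\rVert \le \int_{t_0}^{t} \lVert F(x(s))\rVert\, ds. \]

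Next, I would control the integrand by comparing $F(x(s))$ to its value at the initial state. Adding and subtracting $F(x_{t_0})$ and using the Lipschitz bound yields $\lVert F(x(s))\rVert \le \lVert F(x_{t_0})\rVert + L\,\delta(s)$. Setting $c := \lVert F(x_{t_0})\rVert = \lVert f(x_{t_0}) + g(x_{t_0})u\rVert$, this produces the linear integral inequality
\[ \delta(t) \le \int_{t_0}^{t} \bigl( c + L\,\delta(s) \bigr)\, ds. \]
I would then close the estimate with Grönwall's inequality, or equivalently an integrating-factor argument on the majorant $w(t) = \int_{t_0}^t (c + L\delta(s))\,ds$, which satisfies $\dot{w} \le L w + c$ and $w(t_0) = 0$. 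Multiplying by $e^{-L(s - t_0)}$ and integrating gives $w(t) \le \frac{c}{L}\bigl(e^{L\Delta t} - 1\bigr)$, and since $\delta(t) \le w(t)$ we obtain $e(t, x_{t_0}, u) \le \frac{1}{L}\lVert f(x_{t_0}) + g(x_{t_0})u\rVert\bigl(e^{L\Delta t} - 1\bigr) = \bar{e}(t, x_{t_0}, u)$, as claimed.

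I expect the only genuine subtlety to be well-posedness rather than the inequality manipulation: the Grönwall step requires that the solution exist on the whole interval $[t_0, t]$ and remain in a region where $f$ and $g$ satisfy their Lipschitz bounds, so that $L$ is valid uniformly along the trajectory. Because the input is held constant and bounded by $\bar{u}$, the frozen field $F$ is globally Lipschitz with the single constant $L$, which guarantees existence and makes the estimate uniform; this is precisely what lets a state- and input-dependent deviation bound be expressed with one constant $L$. A minor edge case is $L = 0$, where the bound is read in the limit as $\delta(t) \le c\,\Delta t$, consistent with $\lim_{L\to 0}\frac{1}{L}(e^{L\Delta t}-1) = \Delta t$.
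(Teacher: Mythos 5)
Your proof is correct. The paper itself gives no argument for this proposition---its proof is a one-line deferral to~\cite{yang2019}---and your derivation (freezing the input to get an $L$-Lipschitz autonomous field, reducing to the integral inequality $\delta(t) \leq \int_{t_0}^{t}(c + L\,\delta(s))\,ds$, and closing it with the integrating-factor form of Gr\"onwall on the majorant $w$) is precisely the standard argument underlying that cited result, landing exactly on $\frac{c}{L}\left(e^{L\Delta t}-1\right)$ and correctly handling the $L=0$ limit.
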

\begin{proof}
    The proof follows from~\cite{yang2019}.  
\end{proof}

With~\autoref{thm:bounded-error} and compact $\set{U}$,
we can show that trajectories starting in $\set{C}_{\set{Z}_{1,k}}$ will stay bounded: 
\begin{lemma}[Bounded trajectories]
\label{thm:bounded-error-max}
    The error $e(t, x, u)$ is upper-bounded for any $x \in \set{C}_{\set{Z}_{1,k}}$ and any $u \in \set{U}$ with
    \begin{equation}
            \bar{e}(t, x, u) \leq  \frac{\bar{f} + \bar{g} \bar{u} }{L}  \left( \exp (L \Delta t) - 1 \right) \label{eq:error-bound-u-max}=: \hat{e}(\Delta t) \,,
    \end{equation}
    where $\bar{f} = \max_{x \in \set{C}_{\set{Z}_{1,k}}} \lVert f(x) \rVert$, $\bar{g} = \max_{x \in \set{C}_{\set{Z}_{1,k}}} \lVert g(x) \rVert$.
\end{lemma}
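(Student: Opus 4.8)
The plan is to derive this bound directly from \autoref{thm:bounded-error}, whose conclusion already gives a state- and input-dependent bound $\bar{e}(t, x, u) = \frac{1}{L} \lVert f(x) + g(x) u \rVert (\exp(L \Delta t) - 1)$ on the sample-and-hold deviation, together with $e(t, x, u) \leq \bar{e}(t, x, u)$. Since the prefactor $\frac{1}{L}(\exp(L\Delta t)-1)$ is a fixed nonnegative constant once $\Delta t$ is fixed, the entire task reduces to producing a uniform upper bound on the single quantity $\lVert f(x) + g(x) u \rVert$ that is valid for every $x \in \set{C}_{\set{Z}_{1,k}}$ and every $u \in \set{U}$.

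First I would apply the triangle inequality and submultiplicativity of the induced norm to split this term as
\begin{equation*}
    \lVert f(x) + g(x) u \rVert \leq \lVert f(x) \rVert + \lVert g(x) \rVert \, \lVert u \rVert \,.
\end{equation*}
Next I would bound each of the three factors by its worst case over the relevant compact set: $\lVert f(x) \rVert \leq \bar{f}$ and $\lVert g(x) \rVert \leq \bar{g}$ for all $x \in \set{C}_{\set{Z}_{1,k}}$ by the definitions of $\bar{f}$ and $\bar{g}$, and $\lVert u \rVert \leq \bar{u}$ for all $u \in \set{U}$ by the definition of $\bar{u}$. Combining these yields $\lVert f(x) + g(x) u \rVert \leq \bar{f} + \bar{g}\bar{u}$ uniformly, and substituting this back into $\bar{e}$ gives $\bar{e}(t,x,u) \leq \frac{\bar{f}+\bar{g}\bar{u}}{L}(\exp(L\Delta t)-1) = \hat{e}(\Delta t)$. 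Chaining with $e \leq \bar{e}$ from \autoref{thm:bounded-error} closes the argument.

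There is no deep obstacle here; the only point deserving a word of justification is that the maxima $\bar{f}$ and $\bar{g}$ are actually attained and finite. This is where compactness enters: $\set{C}_{\set{Z}_{1,k}}$ is compact (it is the intersection of the superlevel sets and was constructed as a compact inner approximation of $\set{X}$), $f$ and $g$ are continuous (indeed Lipschitz) on $\set{D}$, so the continuous maps $x \mapsto \lVert f(x) \rVert$ and $x \mapsto \lVert g(x) \rVert$ attain their suprema on $\set{C}_{\set{Z}_{1,k}}$; similarly $\bar{u}$ is finite because $\set{U}$ is compact. I would state this compactness/continuity observation explicitly so that the worst-case bounds are well defined, and otherwise the result is an immediate consequence of \autoref{thm:bounded-error}.
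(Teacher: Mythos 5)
Your proposal is correct and follows essentially the same route as the paper's proof: bound $\lVert f(x) + g(x)u \rVert \leq \bar{f} + \bar{g}\bar{u}$ via the triangle inequality and worst-case maxima over the compact sets, then substitute into the bound from \autoref{thm:bounded-error}. If anything, your justification for the existence of the maxima (compactness plus continuity) is slightly more careful than the paper's, which appeals only to Lipschitz continuity on a bounded set.
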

\begin{proof} 
    The dynamics are upper bounded by 
        $\lVert f(x)  +  g(x)  u \rVert 
        \leq \bar{f} + \bar{g} \bar{u} \,.$
    The maxima $\bar{f}$ and $\bar{g}$ exist because $f$ and $g$ are Lipschitz continuous on the bounded set $\set{C}_{\set{Z}_{1,k}}$. Finally, the maxima upper-bound the deviation for all states $x \in \set{C}_{\set{Z}_{1,k}}$. 
    This gives the desired result.     
\end{proof}

We leverage the result in~\autoref{thm:bounded-error-max} to determine a positive sampling time $\Delta t$ that yields control invariance for the sampled-data control system. This is formalized in the following theorem: 
\begin{theorem}[Safe sampled-data control]
    \label{thm:safe-sampled-data}
    Consider the sample-and-hold system in~\eqref{eq:sampled-data-sys}. 
    Let $\set{U}$ be a compact set and let the sample-and-hold error for each CBF be bounded by $e_{i, \max} = - \frac{\gamma_i(-d_i)}{L_\pi M_i}$. If the sampling time $\Delta t > 0$ and 
    \begin{equation}
    \label{eq:delta-t-bound}
        \Delta t \leq \frac{1}{L} \ln \left(1 + \frac{\min_{i \in \set{Z}_{1, K}} \left(-\gamma_i (-d_i) \right) L}{L_\pi M_i (\bar{f} + \bar{g} \bar{u})}\right)
    \end{equation} 
    for $d_i \in \left( 0, h_{i, \max} \right)$ such that $\set{C}_d = \bigcap_{i \in \set{Z}_{1, K}} \set{C}_{d_i} = \left\{ x \in \R^n \, \vert \, h_i(x) - d_i \geq 0 \right\} \neq \emptyset$ for each $i \in \set{Z}_{1, K}$, 
    then the sample-and-hold system~\eqref{eq:sampled-data-sys} can be rendered positively control invariant on $\set{C}_{\set{Z}_{1, K}}$ using~\autoref{thm:inv-sampled-data} if the resulting control input set $\set{U}_{(h_i, \gamma_i)_{i \in \set{Z}_{1,K}}}(x) \neq \emptyset$, $ \forall x \in \set{C}_{\set{Z}_{1, K}}$. 
\end{theorem}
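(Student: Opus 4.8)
The plan is to reduce the claim to a repeated application of \autoref{thm:inv-sampled-data}, one invocation per CBF $h_i$, tied together through the uniform trajectory-error bound of \autoref{thm:bounded-error-max}. The key observation is that \autoref{thm:inv-sampled-data} certifies control invariance of the $i$-th superlevel set as soon as the realized sample-and-hold error obeys $\lVert e \rVert_\infty \le e_{i,\max} = -\gamma_i(-d_i)/(L_\pi M_i)$. Since $\set{C}_{\set{Z}_{1,K}} = \bigcap_{i \in \set{Z}_{1,K}} \set{C}_{\{i\}}$ is maintained exactly when all $K$ constraints are simultaneously preserved along the same flow, I would require the single realized error to meet the most stringent of these bounds, $\lVert e \rVert_\infty \le \min_{i \in \set{Z}_{1,K}} e_{i,\max}$.

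Next I would invoke \autoref{thm:bounded-error-max}, which supplies a CBF-independent overestimate $\hat{e}(\Delta t) = \tfrac{\bar{f} + \bar{g}\bar{u}}{L}(\exp(L\Delta t)-1)$ of the deviation valid for every $x \in \set{C}_{\set{Z}_{1,K}}$ and every $u \in \set{U}$; crucially this depends only on the dynamics and input set, not on the particular CBF, so the same $\hat{e}(\Delta t)$ bounds the error in all $K$ conditions. It then suffices to enforce $\hat{e}(\Delta t) \le \min_i e_{i,\max}$. Because $\hat{e}$ is continuous and strictly increasing in $\Delta t$ (as $L>0$), this inequality inverts cleanly: substituting the expressions for $\hat{e}$ and $e_{i,\max}$, isolating the exponential, and taking logarithms yields precisely the threshold in \eqref{eq:delta-t-bound}, so any $\Delta t>0$ below it guarantees $\hat{e}(\Delta t) \le e_{i,\max}$ for every $i$ and hence verifies each per-CBF hypothesis.

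The remaining step --- and the one I expect to be the main obstacle --- is assembling the $K$ individual guarantees into joint invariance of the intersection. Applying \autoref{thm:inv-sampled-data} $K$ times only controls each $h_i$ separately; what must still be ensured is that a single certified input honors all tightened CBF conditions on every sampling interval, so that no constraint is violated while another is being satisfied. This is exactly the role of the hypothesis $\set{U}_{(h_i, \gamma_i)_{i \in \set{Z}_{1,K}}}(x) \neq \emptyset$ for all $x \in \set{C}_{\set{Z}_{1,K}}$: it certifies that the intersected QP in \eqref{eqn:cbf_qp} returns a feasible $u_{\text{cert},k}$ keeping every $h_i$ nonnegative. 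I would close by noting that, under joint feasibility and the error bound above, each superlevel set is held invariant by \autoref{thm:inv-sampled-data}, and therefore so is $\set{C}_{\set{Z}_{1,K}}$; care is needed to confirm that the common bound $\hat{e}(\Delta t)$, computed with $\bar{f}, \bar{g}$ over the whole intersection, remains valid up to the boundary $\partial \set{C}_{\set{Z}_{1,K}}$ so that the boundary is never crossed between samples.
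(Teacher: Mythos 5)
Your proposal is correct and follows essentially the same route as the paper's proof: bound the sample-and-hold deviation uniformly via \autoref{thm:bounded-error} and \autoref{thm:bounded-error-max}, require $\hat{e}(\Delta t) \leq e_{i,\max}$ for each $i$, invert the monotone bound to obtain the per-CBF threshold on $\Delta t$, take the minimum over $i$ (pulled inside the logarithm), and invoke the non-emptiness of $\set{U}_{(h_i, \gamma_i)_{i \in \set{Z}_{1,K}}}(x)$ to ensure a single feasible input satisfies all tightened CBF conditions so that \autoref{thm:inv-sampled-data} yields invariance of the intersection. Your added remarks on the monotonicity of $\hat{e}$ in $\Delta t$ and on assembling the per-CBF guarantees into joint invariance make explicit what the paper leaves implicit, but the argument is the same.
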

\begin{proof}
    As the upper bound on the sampled-data-error $\hat{e}(\Delta t_i)$ is bounded by $e_{i, \max}$ we get
    \begin{equation*}
        \frac{\bar{f} + \bar{g} \bar{u} }{L}  \left( \exp (L \Delta t_i) - 1 \right) \leq - \frac{\gamma_i(-d_i)}{L_\pi M_i} \,.
    \end{equation*}
    Due to~\autoref{thm:bounded-error} and~\autoref{thm:bounded-error-max}, satisfying this inequality yields a valid bound on all feasible $e(\Delta t_i, x, u)$ with $x \in \set{C}_{\set{Z}_{1, K}}$ and $u \in \set{U}_{(h_i, \gamma_i)_{i \in \set{Z}_{1,K}}} (x)$.  
    Then, solving for $\Delta t_i$, results in 
    \begin{equation*}
        \Delta t_i \leq \frac{1}{L} \ln \left(1 + \frac{-\gamma_i (-d_i) L}{L_\pi M_i (\bar{f} + \bar{g} \bar{u})}\right) \,.
    \end{equation*}
    Taking the minimum over all $i \in \set{Z}_{1,K}$ and moving the minimum into the logarithm gives the desired result. 
    Finally, the last condition on the resulting control input set guarantees that the resulting CBF condition yields a feasible control input at every state inside the safe set. 
\end{proof}

While this lets us compute the required sampling time $\Delta t$, in practice, it is usually fixed by the system or hardware. Thus, given $\Delta t$, we want to check if it ensures safety by determining the necessary tightening $d_i$ for each CBF, as stated in the following corollary:
\begin{corollary}
\label{thm:safe-sampled-data-cor}
    Let $\set{U}$ be a compact set and let the sample-and-hold error for each CBF be bounded by $e_{i, \max} = - \frac{\gamma_i(-d_i)}{L_\pi M_i}$. If  $d_i \in \left( 0, h_{i, \max} \right)$ for each $i \in \set{Z}_{1, K}$ satisfies 
    \begin{align*}
    d_i \geq 
    -\gamma_i^{-1} \left( \frac{L_\pi M_i  
    (\bar{f} + \bar{g} \bar{u}) }{L } \left( 1 - \exp (L) \Delta t) \right) \right)\,,
\end{align*}
    with $\Delta t > 0$ such that $\set{C}_d = \bigcap_{i \in \set{Z}_{1, K}} \set{C}_{d_i} = \left\{ x \in \R^n \, \vert \, h_i(x) - d_i \geq 0 \right\} \neq \emptyset$. 
    Then, the sample-and-hold system~\eqref{eq:sampled-data-sys} can be rendered positively control invariant on $\set{C}_{\set{Z}_{1,k}}$ if the resulting control input set $\set{U}_{(h_i, \gamma_i)_{i \in \set{Z}_{1,K}}}(x) \neq \emptyset$, $ \forall x \in \set{C}_{\set{Z}_{1,K}}$. 
\end{corollary}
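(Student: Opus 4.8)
The plan is to mirror the argument in the proof of \autoref{thm:safe-sampled-data}, but instead of fixing $d_i$ and solving for the admissible sampling time, I would fix $\Delta t$ and solve for the required tightening $d_i$; the two statements rest on the same chain of inequalities and differ only in which quantity is treated as the unknown. First I would recall from \autoref{thm:bounded-error-max} that for any $x \in \set{C}_{\set{Z}_{1,K}}$ and $u \in \set{U}$ the sample-and-hold deviation is bounded by $\hat{e}(\Delta t) = \frac{\bar{f} + \bar{g}\bar{u}}{L}(\exp(L\Delta t) - 1)$, while \autoref{thm:inv-sampled-data} requires this deviation to satisfy $\lVert e \rVert_\infty \leq e_{i,\max} = -\frac{\gamma_i(-d_i)}{L_\pi M_i}$ for $\set{C}_{d_i}$ to be rendered input-to-state safe. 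Combining these, the key condition that must hold for each $i \in \set{Z}_{1,K}$ is
\begin{equation*}
    \frac{\bar{f} + \bar{g}\bar{u}}{L}\left(\exp(L\Delta t) - 1\right) \leq -\frac{\gamma_i(-d_i)}{L_\pi M_i} \,.
\end{equation*}

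Next, rather than isolating $\Delta t$, I would rearrange this inequality to isolate $\gamma_i(-d_i)$, obtaining $\gamma_i(-d_i) \leq \frac{L_\pi M_i (\bar{f} + \bar{g}\bar{u})}{L}(1 - \exp(L\Delta t))$. Since $\gamma_i$ is a class-$\K_e$ function, it is continuous and strictly increasing, hence invertible with a strictly increasing inverse $\gamma_i^{-1}$; applying $\gamma_i^{-1}$ preserves the inequality and multiplying by $-1$ flips it, which yields precisely the stated lower bound on $d_i$. Invoking \autoref{thm:inv-sampled-data} for each tightened set and intersecting over $i$ (under the stated non-emptiness of $\set{C}_d$ and feasibility of the joint input set) then gives positive control invariance on $\set{C}_{\set{Z}_{1,K}}$.

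The main point requiring care is the sign bookkeeping around $\gamma_i^{-1}$: for $\Delta t > 0$ the factor $1 - \exp(L\Delta t)$ is negative, so the argument of $\gamma_i^{-1}$ is negative, and since $\gamma_i(0) = 0$ with $\gamma_i$ strictly increasing this correctly maps to a negative $-d_i$, i.e.\ a positive tightening $d_i$. I would also confirm that the chosen $d_i$ keeps $\set{C}_d$ non-empty so that the hypotheses of \autoref{thm:inv-sampled-data} are satisfied; this is already carried as an explicit assumption in the statement.
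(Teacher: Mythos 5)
Your proposal is correct and follows essentially the same route the paper intends: the paper states \autoref{thm:safe-sampled-data-cor} without a separate proof precisely because it is obtained by rearranging the inequality chain in the proof of \autoref{thm:safe-sampled-data} to isolate $d_i$ instead of $\Delta t$, which is exactly what you do (including the correct sign handling of $\gamma_i^{-1}$ on the negative argument $1 - \exp(L\Delta t)$, which also implicitly flags the misplaced parenthesis in the stated bound, properly read as $\exp(L\Delta t)$).
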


\section{Evaluation}
\label{sec:eval}

\subsection{Simulation Example}
\label{sec:sim}
We verify our theoretical results in
simulation.
Details of the simulation setup, along with implementations, can be found at: 
\href{https://github.com/lukasbrunke/multi-cbf}{\texttt{https://github.com/lukasbrunke/multi-cbf}}.

We consider a continuous-time double integrator dynamics model $\dot{x} = A x + Bu$
with $x \in \R^2$ and $u \in \R$.
We investigate two cases: \textit{(i)} using a single CBF~(least relative degree is one) and \textit{(ii)} using two CBFs~(joint relative degree is one).
In the first case, we use the CBF $h(x) = 1 - (x - c)^\intercal P (x - c)$, where $c = \begin{bmatrix}
    0 & 0
\end{bmatrix}^\intercal$ and $P = \text{diag}(1, 2)$. 
For the second case, we solve the optimization in~\eqref{eqn:multi-cbf-synthesis-general} over $\theta_i = \begin{bmatrix}
    \operatorname{diag}(P_i)^\intercal & c_i^\intercal
\end{bmatrix}^\intercal$ and a fixed slope of a linear class-$\K_e$ function expressed as $\phi_i = 2$%
such that $\epsilon = 0.01$ and $\set{C}_{\{1,2 \}} \subseteq \set{C} = \set{X}$.
This yields a safety filter with the CBFs $h_{i}(x) = 1 - (x - c_{i})^\intercal P_{i} (x - c_{i})$, with $i \in \left\{1, 2\right\}$, where $c_{1} = \begin{bmatrix}
    0.01 & - 1.27
\end{bmatrix}^\intercal$, 
$c_{2} = \begin{bmatrix}
    0.02 & 1.33
\end{bmatrix}^\intercal$, 
$P_{1} = \text{diag}(0.60, 0.26)$, and $P_2 = \text{diag}(0.56, 0.26)$. 
For the single CBF, we also use $\gamma(r) = \gamma_{i}(r) = 2r$. 
The unsafe control input policy is $\pi(x) = -1.0$ and the input constraint set is $\set{U} = \{ u \in \R \,\,\vert\,\, \lvert u \rvert \leq 2.0 \}$.
For both cases, we use a control frequency of $\SI{10}{\kilo \hertz}$ in the simulation. 
We use~\autoref{thm:safe-sampled-data} to determine the required tightenings $d > 0$ and $d_i > 0$, $\forall i \in \{1, 2\}$ for each CBF to apply our results for sampled data systems. 
Our calculations yield $d_1 = 0.0002$ and $d_2 = 0.0007$. The high control frequency leads to a small tightening of the safe sets. The chattering for the single-CBF approach yields a large Lipschitz constant for the closed-loop policy, such that no feasible tightening exists. Instead, we use a nominal implementation for the single-CBF case.  

    In~\autoref{fig:combined-cbf-sim-input}, we see that the single-CBF safety filter results in chattering and safe set violations~(left) and that our proposed multi-CBF safety filter successfully prevents a high-frequency control input signal~(right).
    The multiple CBFs are tightened according to our theoretical results in~\autoref{thm:safe-sampled-data} to achieve a safe closed-loop sampled-data system. 
    We highlight the states where the relative degree $\rho \neq 1$ in green~(top). Note that these states are shown using a dotted-dashed line for the single CBF case and a dashed line for each CBF in the case with multiple CBFs. However, since the intersection of these sets is empty in the multiple CBF case, no states satisfy $\rho \neq 1$ for both CBFs simultaneously. This prevents the safety filter with the multiple CBFs from being inactive. 
    For both safety filters, we 
    apply an uncertified control input policy~(blue line in the bottom plots). 
    While our proposed approach never violates the safety constraints, the single-CBF approach violates the constraints~(red line in the top plots) and yields a chattering behavior as visible in the input trajectory~(bottom plot). 
    This is because the single CBF safety filter becomes inactive close to states $x \in \set{X}_{\rho \neq 1}$. For the safety filter with multiple CBFs, there is no chattering. We also highlight the input constraints~(black dashed lines) and show that the lower bound on the constraint converges to~$0$. Therefore, no chattering is possible with our proposed safety filter. 

\begin{figure}[t]
    \centering
    \include{tikz/sim-state-10khz}
\vspace{-1.4cm}
    \include{tikz/sim-input-10khz}
    \vspace{-1.0cm}
    \caption{
    Simulation comparing single-CBF safety filter with chattering (left) versus our multi-CBF safety filter preventing high-frequency control~(right). 
    The states with relative degree $\rho \neq 1$ are shown in green. Both filters use uncertified policy $\pi(x) = -1.0$. The single-CBF case exhibits constraint set violations and input chattering due to loose constraints near $x \in \set{D}_{\rho \neq 1}$~(not shown to avoid clutter). Our multi-CBF approach ensures safety filter activity and prevents chattering.
    }
    \label{fig:combined-cbf-sim-input}
\end{figure}

Solving QPs at high rates (e.g., $\SI{10}{\kilo \hertz}$) is often impractical on resource-constrained hardware. We therefore repeated the simulation at $\SI{100}{\hertz}$, for which theoretical guarantees no longer hold, as tightened safe sets lead to infeasible QPs. However, similar behavior was observed without tightening, and results~(not shown) closely match those in~\autoref{fig:combined-cbf-sim-input}. While theoretical conditions are not met, using multiple CBFs still improves closed-loop performance over a single CBF. This suggests that lower control frequencies may suffice in practice.

\subsection{Quadrotor Experiments}
\label{sec:quad}
We verify our proposed method of multiple CBFs to prevent inactive safety filters on a real-world nonlinear quadrotor system. A picture of the experiment is shown in~\autoref{fig:money-figure}, and a video can be found at this link: \href{http://tiny.cc/multi-cbf}{\texttt{http://tiny.cc/multi-cbf}}.

We leverage the quadrotor's roll, pitch, yaw, and collective thrust interface,
where the thrust is commanded through pulse-width modulation~(PWM).
As the $y$-position is controlled to be 0, the reduced system state is $\vec{x} = \begin{bmatrix}
    p_x & p_z & \theta & v_x & v_z
\end{bmatrix}^\intercal \in \R^5$, where $\theta$ is the quadrotor's pitch angle, and the commanded control input is $u = \begin{bmatrix}
    \theta_{\text{desired}} & F_{\text{desired}}
\end{bmatrix}^\intercal \in \R^2$. 
The identified nonlinear model is
$\dot{x} = f(x) + g(x) u$, 
where $f(x) = \begin{bmatrix}
        v_x & v_z & \alpha_1 \theta & \beta_1 \sin(\theta) & \beta_1 \cos(\theta) -g
    \end{bmatrix}^\intercal$, $g(x) = \text{diag}(g_1, g_2(x))$, 
    $g_1 = \begin{bmatrix}
    0 & 0 & \alpha_2
\end{bmatrix}^\intercal$, $g_2(x) = \beta_2 \begin{bmatrix}
    \sin(\theta) & \cos(\theta)
\end{bmatrix}^\intercal$, $\alpha_1 = -\alpha_2 = -60.00$, $\beta_1 = 4.60$, and $\beta_2 = 15.40$. 
We use a position controller $\pi(t, x)$ 
to track an unsafe reference trajectory. 

In the first case, we use a single CBF $h(x) = 1 - (x - c)^\intercal P (x - c)$, where $c = \begin{bmatrix}
    0.0 & 1.52 & 0.0 & 0.0 & 0.0
\end{bmatrix}^\intercal$ and $P = \text{diag}(0.0, 0.7, 0.0, 0.0, 2.0)$. In the second case, we apply a safety filter with the CBFs $h_{i}(x) = 1 - (x - c_{i})^\intercal P_{i} (x - c_{i})$, with $i \in \left\{1, 2\right\}$, where $c_{1} = \begin{bmatrix}
    0.0 & 1.55 & 0.0 & 0.0 & 2.5
\end{bmatrix}^\intercal$, $c_{2} = \begin{bmatrix}
    0.0 & 1.55 & 0.0 & 0.0 & -2.5
\end{bmatrix}^\intercal$, and $P_{1} = P_2= \text{diag}(0.0, 0.25, 0.0, 0.0, 0.1)$. The CBFs are chosen such that $\set{C}_{\{ 1, 2\}} \subseteq \set{C}= \set{X}$.
For all CBFs, we use $\gamma_{}(r) = \gamma_{i}(r) = 3r$ and we implement the safety filters at a frequency of $\SI{60}{\hertz}$. 

For both safety filters, we initialize the system at approximately $x = \begin{bmatrix}
   0.0 & 1.0 & 0.0 & 0.0 & 0.0
\end{bmatrix}^\intercal$ and apply the same uncertified policy $\pi(t, x)$~(blue line in~\autoref{fig:multi-cbf-real-input}).
We show that the safety filter with multiple CBFs~(plots on the right-hand side in~\autoref{fig:multi-cbf-real-input}) achieves safety. In contrast, the safety filter with a single CBF~(plots on the left-hand side) violates the safe set. In addition, it yields an oscillatory behavior in the certified control input and the closed-loop state trajectory. 
We highlight the states where the $\mathcal{L}_g h(x) = 0$ in green. These states are not visible in the state space plot on the right (the case with multiple CBFs). However, since the intersection of these sets is empty in the multiple CBF case, no states satisfy $\mathcal{L}_g h_1(x) = 0 \land \mathcal{L}_g h_2(x) = 0$ for both CBFs simultaneously. This prevents the safety filter with the multiple CBFs from being inactive and keeps the system inside the safe set~$\set{C}_{\{1, 2 \}}$.
Due to the choice of the class $\K_e$ function, the quadrotor stays further away from the boundary of the safe set compared to the system in the simulation result, see~\autoref{fig:money-figure}. 

\begin{figure}[t]
    \centering
    \include{tikz/real-state}
\vspace{-1.4cm}

    \include{tikz/real-input}
    \vspace{-1.0cm}
    \caption{
    Real-world quadrotor comparison: single-CBF safety filter violating the safe set~(left) versus our multi-CBF safety filter preventing violations~(right). Both use $\SI{60}{\hertz}$ control frequency with identical uncertified tracking policy $\pi(t, x)$. The multi-CBF filter achieves safety, while the single-CBF case exhibits oscillatory behavior and safe set violations. The states with relative degree $\rho \neq 1$ are shown in green. 
    }
    \label{fig:multi-cbf-real-input}
\end{figure}

\section{Conclusion}
\label{sec:conclusion}
In this work, we investigated the issue of invalid relative degree assumptions, how they result in inactive CBF safety filters, and their negative impact on discrete-time implementations such as chattering and safe set violations.
We characterize when a CBF safety filter is inactive, analyze the direction and magnitudes of control inputs at states of inactivity, and discuss the implications of the discrete-time implementation. 
We present a method that leverages multiple CBFs to prevent inactive safety filters and
an upper bound on the sampling time to ensure safety in discrete-time implementations.
From the simulation and quadrotor experiments, we verify and show that our proposed method efficiently mitigates chattering and safe set violations caused by inactive safety filters.





\ifreview 
\else
    \section*{References}
\fi

\bibliographystyle{./IEEEtranBST/IEEEtran}
\bibliography{./IEEEtranBST/IEEEabrv,./root}

\ifreview 
\else
    \vspace{-5cm}
\fi

\end{document}